%-----------------------------------------------------------------------
% ESA 2022
%-----------------------------------------------------------------------
\documentclass[a4paper,USenglish,cleveref,autoref,thm-restate,nolineno]{socg-lipics-v2021}
%-----------------------------------------------------------------------
\usepackage{microtype}
\usepackage{wrapfig}
\usepackage{graphicx}
\usepackage{xspace}
\usepackage{multicol}
\usepackage{cite}
\usepackage{upgreek}
\usepackage{varwidth}
\usepackage[table]{xcolor}
\usepackage[ruled,vlined,linesnumbered]{algorithm2e}
\graphicspath{{img/}}
\captionsetup[subfigure]{justification=centering}
\usepackage{blindtext}
\usepackage{paralist}

%-----------------------------------------------------------------------
% General special symbols
%-----------------------------------------------------------------------

\newcommand{\RE}{\mathbb{R}}

\newcommand{\bigOh}{\mathcal{O}}

\newcommand{\OO}[1]{O\kern-2pt\left(#1\right)}  % big O for big formulas

\newcommand{\etal}{\textit{et al.}}

%-----------------------------------------------------------------------
% Special symbols for NNC
%-----------------------------------------------------------------------
\newcommand{\trainingSet}{\ensuremath{P}\xspace}
\newcommand{\reducedSet}{\ensuremath{R}\xspace}

\newcommand{\ie}{\emph{i.e.},\xspace}
\newcommand{\eg}{\emph{e.g.},\xspace}

\newcommand{\numBorder}{k}

%-----------------------------------------------------------------------
% Acronyms
%-----------------------------------------------------------------------

%-----------------------------------------------------------------------
% Others
%-----------------------------------------------------------------------
\definecolor{yellowcd}{RGB}{252, 229, 30}
\definecolor{bluecd}{RGB}{51, 0, 68}
%-----------------------------------------------------------------------
% Comments to authors
%-----------------------------------------------------------------------

%-----------------------------------------------------------------------
% Tittle & Authors
%-----------------------------------------------------------------------
\title{Improved Search of Relevant Points for Nearest-Neighbor Classification} %Finding Relevant Points for Nearest-Neighbor Classification, Simpler and Faster}
%\titlerunning{}

\author{Alejandro Flores-Velazco}{
    Department of Computer Science\\
    University of Maryland, College Park, MD, USA}
    {afloresv@umd.edu}
    {https://orcid.org/0000-0003-0868-9802}{}
\authorrunning{A.\ Flores\,-Velazco}
\Copyright{A.\ Flores\,-Velazco}

\ccsdesc[500]{Theory of computation~Computational geometry}
\keywords{
    nearest-neighbor classification,
    nearest-neighbor rule,
    decision boundaries,
    border points,
    relevant points}
\category{}
%\relatedversion{}
%\supplement{}
\funding{} %Research partially supported by NSF grant CCF-1618866.}
%\acknowledgements{}
%\hideLIPIcs

%Editor-only macros:: begin (do not touch as author)%%%%%%%%%%%%%%%%%%%%%%%%%%%%%%%%%%
\EventEditors{}
\EventNoEds{1}
\EventLongTitle{}
\EventShortTitle{}
\EventAcronym{}
\EventYear{2022}
\EventDate{}
\EventLocation{}
\EventLogo{}
\SeriesVolume{}
\ArticleNo{}
%%%%%%%%%%%%%%%%%%%%%%%%%%%%%%%%%%%%%%%%%%%%%%%%%%%%%%

\begin{document}

\maketitle

\begin{abstract}
Given a training set $\trainingSet \subset \RE^d$, the \emph{nearest-neighbor classifier} assigns any query point $q \in \RE^d$ to the class of its closest point in \trainingSet. To answer these classification queries, some training points are more relevant than others. We say a training point is \emph{relevant} if its omission from the training set could induce the misclassification of some query point in $\RE^d$. These relevant points are commonly known as \emph{border points}, as they define the boundaries of the Voronoi diagram of \trainingSet that separate points of different classes. Being able to compute this set of points efficiently is crucial to reduce the size of the training set without affecting the accuracy of the nearest-neighbor classifier.

Improving over a decades-long result by~\cite{clarkson1994more}, in a recent paper~\cite{eppstein2022finding} an \emph{output-sensitive} algorithm was proposed to find the set of border points of \trainingSet in $\bigOh( n^2 + n\numBorder^2 )$ time, where $\numBorder$ is the size of such set. In this paper, we improve this algorithm to have time complexity equal to $\bigOh( n\numBorder^2 )$ by proving that the first steps of their algorithm, which require $\bigOh( n^2 )$ time, are unnecessary.
\end{abstract}

\begin{comment}
Given a training set P ⊂ ℝ^d, the nearest-neighbor classifier assigns any query point q ∈ ℝ^d to the class of its closest point in P. To answer these classification queries, some training points are more relevant than others. We say a training point is relevant if its omission from the training set could induce the misclassification of some query point in ℝ^d. These relevant points are commonly known as border points, and they define the boundaries of P between points of different classes. Being able to compute this set of points efficiently is crucial to reduce the size of the training set without affecting the accuracy of the nearest-neighbor classifier.

Improving over a decades-long result by Clarkson, in a recent paper by Eppstein an output-sensitive algorithm was proposed to find the set of border points of P in O(n^2 + nk^2) time, where k is the size of such set. In this paper, we further improve this algorithm to have time complexity equal to O(nk^2) by proving that the first steps of their algorithm, which require O(n^2) time, are in fact unnecessary.
\end{comment}

\section{Introduction}
\label{sec:intro}

In the context of non-parametric classification, we are given a training set $\trainingSet \subset \RE^d$ consisting of $n$ \emph{labeled} points in $d$-dimensional Euclidean space, where the label of every point in \trainingSet indicates the \emph{class} (or \emph{color}) that the point belongs to. The goal of a classifier is to use the training set \trainingSet to \emph{predict} the class for any \emph{unlabeled} query point $q \in \RE^d$, that is, to \emph{classify} $q$.

The \emph{nearest-neighbor classifier} (also known as \emph{nearest-neighbor rule})~\cite{fix_51_discriminatory} stands out as a simple yet powerful method, that works by assigning any query point $q$ to the class of its closest point in \trainingSet. Despite its simplicity, the nearest-neighbor classifier is well-known to exhibit good classification accuracy both experimentally and theoretically~\cite{stone1977consistent, Cover:2006:NNP:2263261.2267456, devroye1981inequality}. In fact, it is still frequently used in many applications~\cite{boiman2008defense, JDH17, avq_2020, DBLP:journals/corr/abs-1905-01019, peri2020deep, sitawarin2019robustness, papernot2018deep} over more recent and sophisticated techniques like support-vector machines~\cite{Cortes95support-vectornetworks} and deep neural networks~\cite{DBLP:journals/corr/Schmidhuber14}.

One of the principal disadvantages of this technique is its high dependency on the size and dimensionality of the data, especially in light of \emph{big data} applications. With training sets with billions of points becoming increasingly common, reducing the nearest-neighbor classifier's dependency on $n$ and $d$ is one approach to enhance its efficiency. There has been significant progress towards this goal, mainly focusing on two directions. The first involves the design of efficient data structures to answer approximate nearest-neighbor queries~\cite{DBLP:conf/esa/Flores-VelazcoM21, johnson2017billionscale, guo2020accelerating, avd-harpeled, arya2009space, arya2018approximate, arya2017optimal}. The second direction focuses on reducing the size of the training set used by the nearest-neighbor classifier, thus effectively reducing $n$. However, most practical techniques for training set reduction provide limited guarantees on the effect of this reduction to the accuracy of the nearest-neighbor classifier~\cite{DBLP:conf/cccg/Flores-VelazcoM19, cccg20afloresv, angiulli2007fast, floresvelazco_et_al:LIPIcs:2020:12913}.

\begin{figure*}[ht]
    \centering
    \begin{subfigure}[b]{.35\linewidth}
        \centering\includegraphics[width=\textwidth]{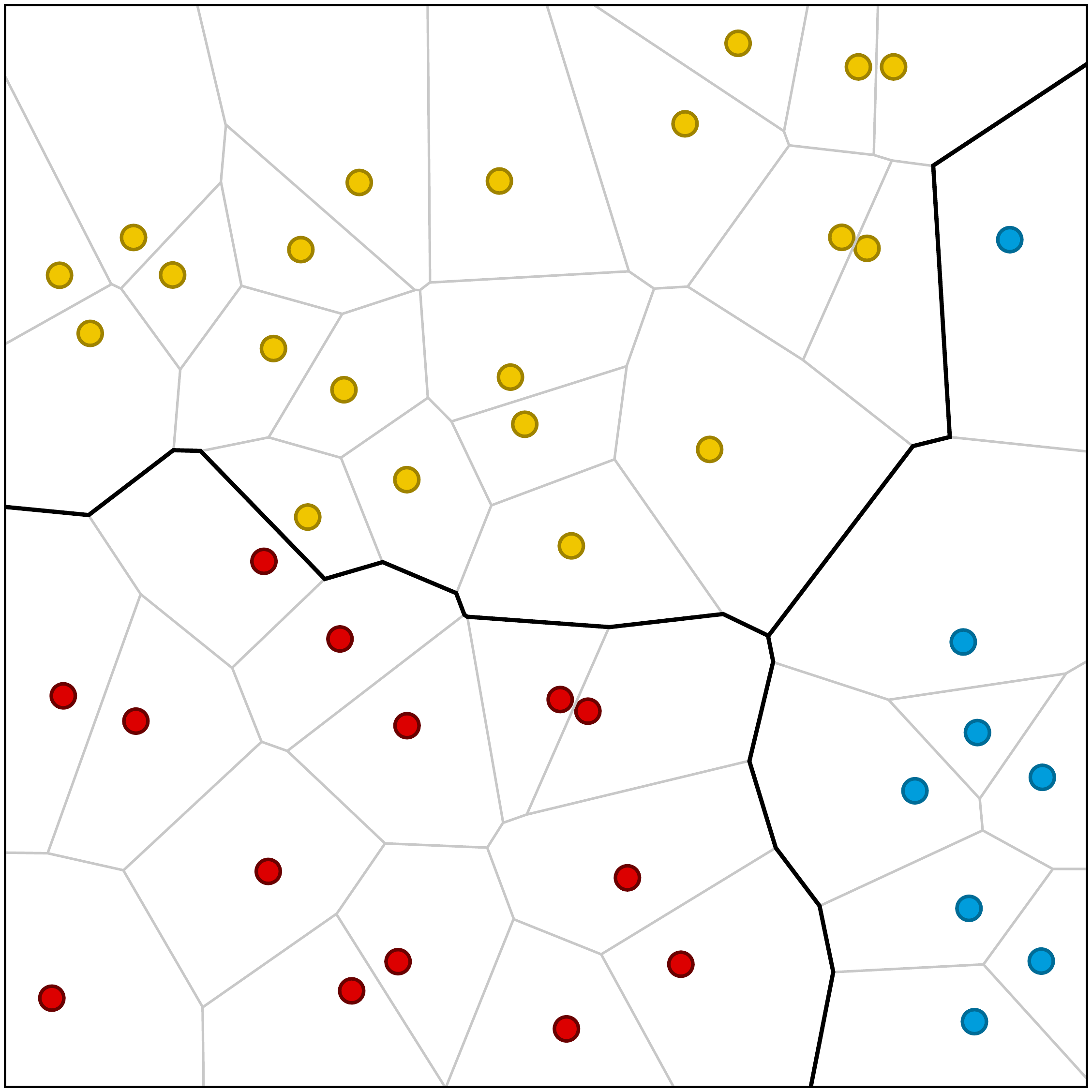}
        \caption{Original training set \trainingSet}
		\label{fig:example:set-training}
    \end{subfigure}%
    \hspace*{0.5cm}
    \begin{subfigure}[b]{.35\linewidth}
        \centering\includegraphics[width=\textwidth]{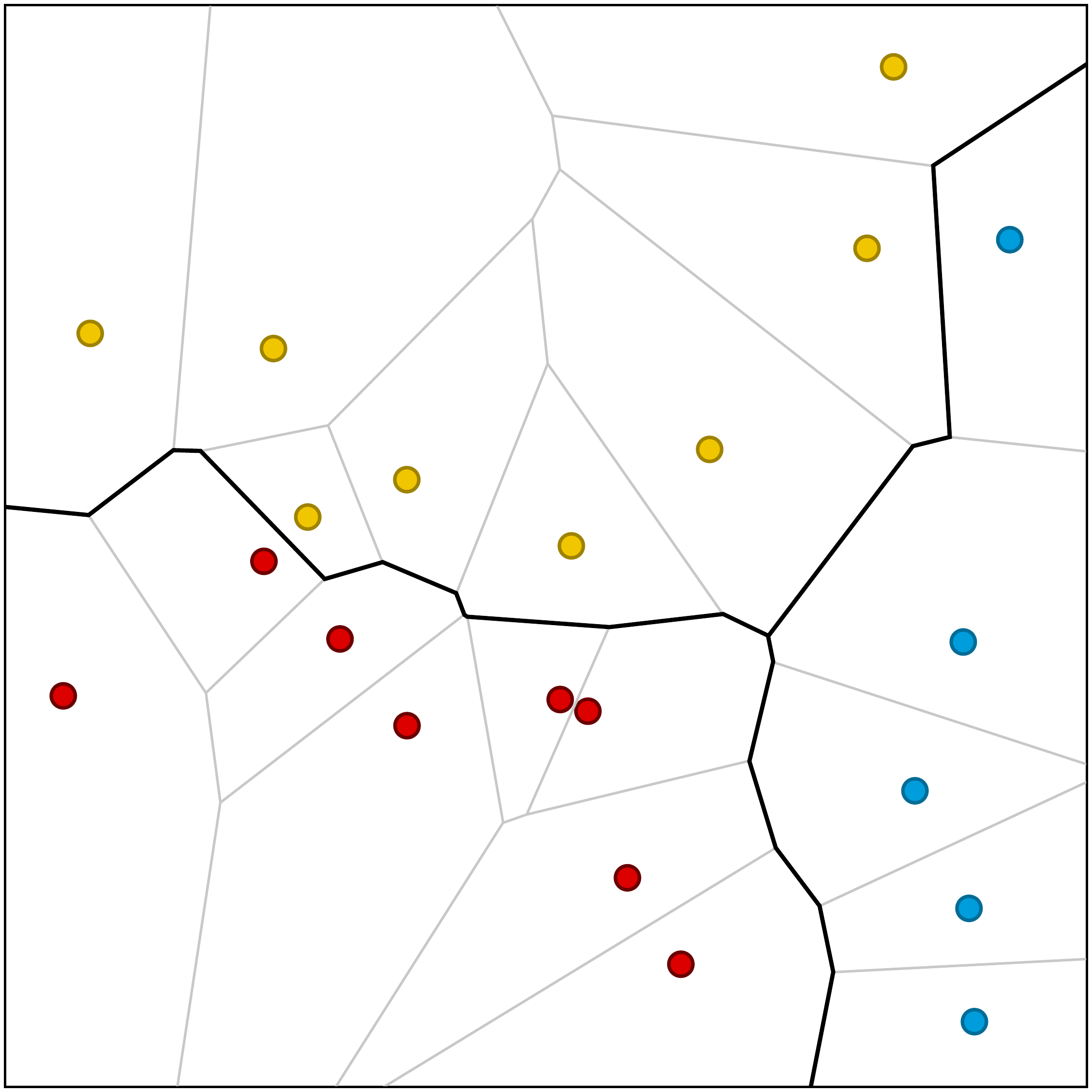}
        \caption{Border/Relevant points of \trainingSet}
		\label{fig:example:set-border}
    \end{subfigure}%
    \caption{On the left, a training set \trainingSet with points of three classes: \emph{red}, \emph{blue} and \emph{yellow}. There the black lines highlight the \emph{boundaries} of \trainingSet between points of different classes. On the right, a subset of these points corresponding to the set of border points of \trainingSet. Note that by definition, the boundaries between points of different classes remain the same for \trainingSet and for its set of border points.}
    \label{fig:example:set}
\end{figure*}

Only a handful of works~\cite{clarkson1994more,Bremner2003,eppstein2022finding} have proposed training set reduction algorithms that guarantee the same classification of every query point, before and after the reduction took place. These are called \emph{boundary preserving} algorithms, and it is the focus of this paper.

The set of \emph{border points} (or \emph{relevant points}\footnote{While~\cite{eppstein2022finding} uses the term \emph{relevant points}, the term \emph{border points} has been the standard in the literature of this and other related problems~\cite{DBLP:conf/cccg/Flores-VelazcoM19, cccg20afloresv, jankowski2004comparison, Wilson:1997:IPT:645526.657143}. For this reason, we stick to the term \emph{border points}.}) of the training set \trainingSet are those that define the boundaries between points of different classes, and whose omission from the training set would imply the misclassification of some query points in $\RE^d$. Formally, two points $p, \hat{p} \in \trainingSet$ are border points of \trainingSet if they belong to different classes, and there exist some point $q \in \RE^d$ such that $q$ is equidistant to both $p$ and $\hat{p}$, and no other point of \trainingSet is closer to $q$ than these two points (\ie the empty ball property of Voronoi Diagrams). See Figure~\ref{fig:example:set} for an example of a training set \trainingSet in $\RE^2$ and its set of border points. Throughout, we let $\numBorder$ denote the total number of border points in the training set. By definition, if instead of building the nearest-neighbor classifier with the entire training set \trainingSet we use the set of border points of \trainingSet, its dependency is reduced from $n$ to $\numBorder$, while still obtaining the same classification for any query point in $\RE^d$. This becomes particularly relevant for applications where $\numBorder \ll n$.
%when considering the Voronoi Diagram of \trainingSet, a border point is any point of \trainingSet whose Vononoi cell neighbors the Voronoi cell of another point of \trainingSet of different class. 

In this paper, we improve a recently proposed algorithm by~\cite{eppstein2022finding} that computes the set of border points of any training set $\trainingSet \subset \RE^d$, where dimension $d$ is assumed to be constant. While the original algorithm computes such set in $\bigOh(n^2 + n\numBorder^2)$ time, where $\numBorder$ is the number of border points of \trainingSet, our new algorithm computes the same set in $\bigOh(n\numBorder^2)$ time.
\subsection{Previous Work}

Other related problems in the realm of training set reduction are NP-hard~\cite{Wilfong:1991:NNP:109648.109673, khodamoradi2018consistent, Zukhba:2010:NPP:1921730.1921735} to solve exactly (\eg those of finding minimum cardinality \emph{consistent} subsets and \emph{selective} subsets). However, the problem of preserving the class boundaries of the nearest-neighbor classifier, or simply, finding the set of border points of \trainingSet, is tractable.

For training sets $\trainingSet \subset \RE^2$ in 2-dimensional Euclidean space, Bremner \etal~\cite{Bremner2003} proposed an output-sensitive algorithm for finding the set of border points of \trainingSet in $\bigOh(n \log{\numBorder})$ worst-case
time. However, how to generalize this algorithm for higher dimensions remained unclear.

Until very recently, the best result for the higher dimensional case was that of Clarkson~\cite{clarkson1994more}. He proposed an algorithm to find the set of border points of $\trainingSet \subset \RE^d$, with bounded $d$, that runs in $\bigOh( \min{(n^3, \numBorder n^2 \log{n})} )$ worst-case time. For almost three decades, this remained the best result for training sets in $\RE^d$. Recently, Eppstein~\cite{eppstein2022finding} proposed a significantly faster algorithm for the $d$-dimensional Euclidean case, which runs in $\bigOh( n^2 + n \numBorder^2 )$ worst-case time.

Eppstein's algorithm is strikingly simple, yet full of interesting ideas (see Algorithm~\ref{alg:all-border-eppstein}). The algorithm works as follows: it begins by selecting an initial set of border points of \trainingSet, one point from every class region. From here, the algorithm uses a series of subroutines which we will group together and denote as the ``\emph{inversion method}'', to find the remaining border points of \trainingSet. Thus, the algorithm can be naturally split into two steps:~the~initialization of \reducedSet with some border points, and the search process for the remaining border points of \trainingSet.

\begin{algorithm}
 \DontPrintSemicolon
 \vspace*{0.1cm}
 \KwIn{Initial training set \trainingSet}
 \KwOut{The set of border points of \trainingSet}
 Let $M$ be the MST of \trainingSet\;
 Initialize \reducedSet with the end points of every bichromatic edge of $M$\;
 \ForEach{$p \in \reducedSet$}{
  Let $c$ be $p$'s class and $\trainingSet_c$ be the points of \trainingSet that belong to class $c$\;
  Let $S_p$ be the inverted points of $\trainingSet \setminus \trainingSet_c$ around $p$\;
  Find all extreme points of $S_p$ and their corresponding original points $E_p$\;
  $\reducedSet \gets \reducedSet \cup E_p$\;
 }
 \KwRet{\reducedSet}
 \vspace*{0.1cm}
 \caption{Recent Eppstein's algorithm~\cite{eppstein2022finding} to find the set of border points of \trainingSet.}
 \label{alg:all-border-eppstein}
\end{algorithm}

The initialization step (lines 1--2 of Algorithm~\ref{alg:all-border-eppstein}) involves finding a subset of border points such that at least one point for every class region is selected. Eppstein observes that this can be achieved by computing the Minimum Spanning Tree (MST) of \trainingSet, identifying the edges of the MST that connect points of different classes (denoted as \emph{bichromatic} edges), and selecting the endpoints of all such edges. This step takes $\bigOh(n^2)$ time, but we will prove that it is not necessary.

The search step (lines 3--6 of Algorithm~\ref{alg:all-border-eppstein}) is in charge of finding every remaining border point of \trainingSet. This step iterates over all selected points, and for each such point $p$, it performs what we call the inversion method. This method identifies a subset of border points of \trainingSet, which are added to \reducedSet. Once the algorithm has done the inversion method on every point of \reducedSet, it terminates with the guarantee of having selected every border point of \trainingSet.

Given any point $p \in \trainingSet$, the inversion method on $p$ is described in lines 4--6 of Algorithm~\ref{alg:all-border-eppstein}. Let $c$ be $p$'s class, and $\trainingSet_c$ be the points of \trainingSet that belong to class $c$, the inversion method on $p$ consists of:
\begin{inparaenum}[(i)]
    \item inverting all points of $\trainingSet \setminus \trainingSet_c$ around a ball centered at $p$ (call the set of these inverted points as $S_p$ and include $p$ itself in the set),
    \item computing the set of extreme points of $S_p$, and finally
    \item returning the set $E_p$ of those points of \trainingSet that correspond to the extreme points of $S_p$ before inversion.
\end{inparaenum}
%First, it inverts every point of \trainingSet of different class than $p$, around some ball centered at $p$ (the radius of this ball does not matter). Formally, it inverts the~set $\trainingSet \setminus \trainingSet_c$ where $c$ is the class of $p$ and $\trainingSet_c \subset \trainingSet$ are the training points that belong to class~$c$. Denote the set of these inverted points as $S_p$, and include $p$ itself in the set.
%After this, the inversion method computes the extreme points (\ie the points in the convex hull) of $S_p$. Finally, the inversion method finds and returns the set $E_p$, defined to be the those points of \trainingSet corresponding to the extreme points of $S_p$ before being inverted.
For a detailed description and proof of correctness of this method, we refer the reader to Eppstein's paper~\cite{eppstein2022finding}. However, for the purposes of this paper we only need a property presented in Lemma~3 of \cite{eppstein2022finding}: the points in $E_p$ reported by the inversion method are the Delaunay neighbors of $p$ with respect to the set $(\trainingSet \setminus \trainingSet_c) \cup \{ p \}$.

Every call of the inversion method takes $\bigOh(n\numBorder)$ time by leveraging well-known output-sensitive algorithms for computing extreme points. Given that this method is called exclusively on every border point of the training set, this yields a total of $\bigOh(n \numBorder^2)$ time to complete the search step of the algorithm. Overall, this implies that Eppstein's algorithm computes the entire set of border points of \trainingSet in $\bigOh(n^2 + n\numBorder^2)$ worst-case time.

\section{Our Approach}

We propose a simple modification to Eppstein's algorithm, which avoids the step of computing the MST of the training set \trainingSet, along with the subsequent selection of bichromatic edges to produce the initial subset of border points.

Instead, we simply start the search process with any arbitrary point of \trainingSet. The rest of the algorithm remains virtually unchanged (see Algorithm~\ref{alg:all-border-mine} for a formal description). We show that this new approach is not only correct, meaning that it only finds border points of \trainingSet, but also complete, as all border points of \trainingSet are eventually found by our algorithm. Additionally, by avoiding the main bottleneck of the original algorithm, our new algorithm computes the same result in $\bigOh(n\numBorder^2)$ time, eliminating the $\bigOh(n^2)$ term.

\begin{algorithm}
 \DontPrintSemicolon
 \vspace*{0.1cm}
 \KwIn{Initial training set \trainingSet}
 \KwOut{The set of border points of \trainingSet}
 Let $s$ be any ``\emph{seed}'' point from \trainingSet\;
 $\reducedSet \gets \phi$\;
 \ForEach{$p \in \reducedSet \cup \{ s \}$}{
  Let $c$ be $p$'s class and $\trainingSet_c$ be the points of \trainingSet that belong to class $c$\;
  Let $S_p$ be the inverted points of $\trainingSet \setminus \trainingSet_c$ around $p$\;
  Find all extreme points of $S_p$ and their corresponding original points $E_p$\;
  $\reducedSet \gets \reducedSet \cup E_p$\;
 }
 \KwRet{\reducedSet}
 \vspace*{0.1cm}
 \caption{New algorithm to find the set of border points of \trainingSet.}
 \label{alg:all-border-mine}
\end{algorithm}

Before proceeding, it is useful to explore why Eppstein's algorithm computes the MST of the training set \trainingSet. First, note that the original algorithm only applies the inversion method on border points of \trainingSet. In fact, Eppstein's correctness proof relies on it: Lemma~6 in~\cite{eppstein2022finding} proves that all points in $E_p$ are border points by assuming that point $p$ is also a border point. From the description of our algorithm, note that we initially apply the inversion method on a ``seed'' point $s$, which might not be a border point. Therefore, we need to generalize Lemma~6 in~\cite{eppstein2022finding} for the case where $p$ is not a border point of \trainingSet. Additionally, using the points from all bichromatic pairs of the MST of \trainingSet guarantees that Eppstein's algorithm starts the search step with at least one point from every boundary of \trainingSet. Eppstein's completeness proof shows that the search step can then ``move along'' any given boundary and eventually select all its defining points. We show that the search process is far more powerful, and can even ``jump'' between nearby boundaries, thus rendering the MST computation unnecessary.

The following description outlines the necessary steps to prove both the correctness and completeness of our new algorithm, which are unfolded in the rest of this section.
\begin{itemize}
    \item By applying the inversion method to any point of \trainingSet, not necessarily a border point, all reported points are border points of \trainingSet. This is established in Lemma~\ref{lemma:seed}, generalizing the statement of Lemma~6 of~\cite{eppstein2022finding} for non-border points.
    \item For any class boundary of \trainingSet, once the algorithm selects a point from this boundary, it will eventually select every other point defining the same boundary. This is originally proved in Lemma~10~\cite{eppstein2022finding}, however, we provide simpler proofs in Lemmas~\ref{lemma:walls} and~\ref{lemma:boundary}.
    \item %Opposite to Eppstein's algorithm, our algorithm does not start by selecting points from every boundary of \trainingSet. This is the main property we lose by avoiding the computation of the MST of \trainingSet.
    Given two disconnected boundaries separated by a class region, we prove that if our algorithm selects a defining point from one of the boundaries, it will eventually select all defining points from both boundaries. This is proved in Lemma~\ref{lemma:boundary-jump}.
\end{itemize}

All together, these lemmas are used to prove the main result: the correctness, completeness, and worst-case time complexity of Algorithm~\ref{alg:all-border-mine}, as stated in Theorems~\ref{thm:main} and~\ref{thm:main:best}.

\begin{figure*}[ht]
    \centering
    \begin{subfigure}[b]{.32\linewidth}
        \centering\includegraphics[width=\textwidth]{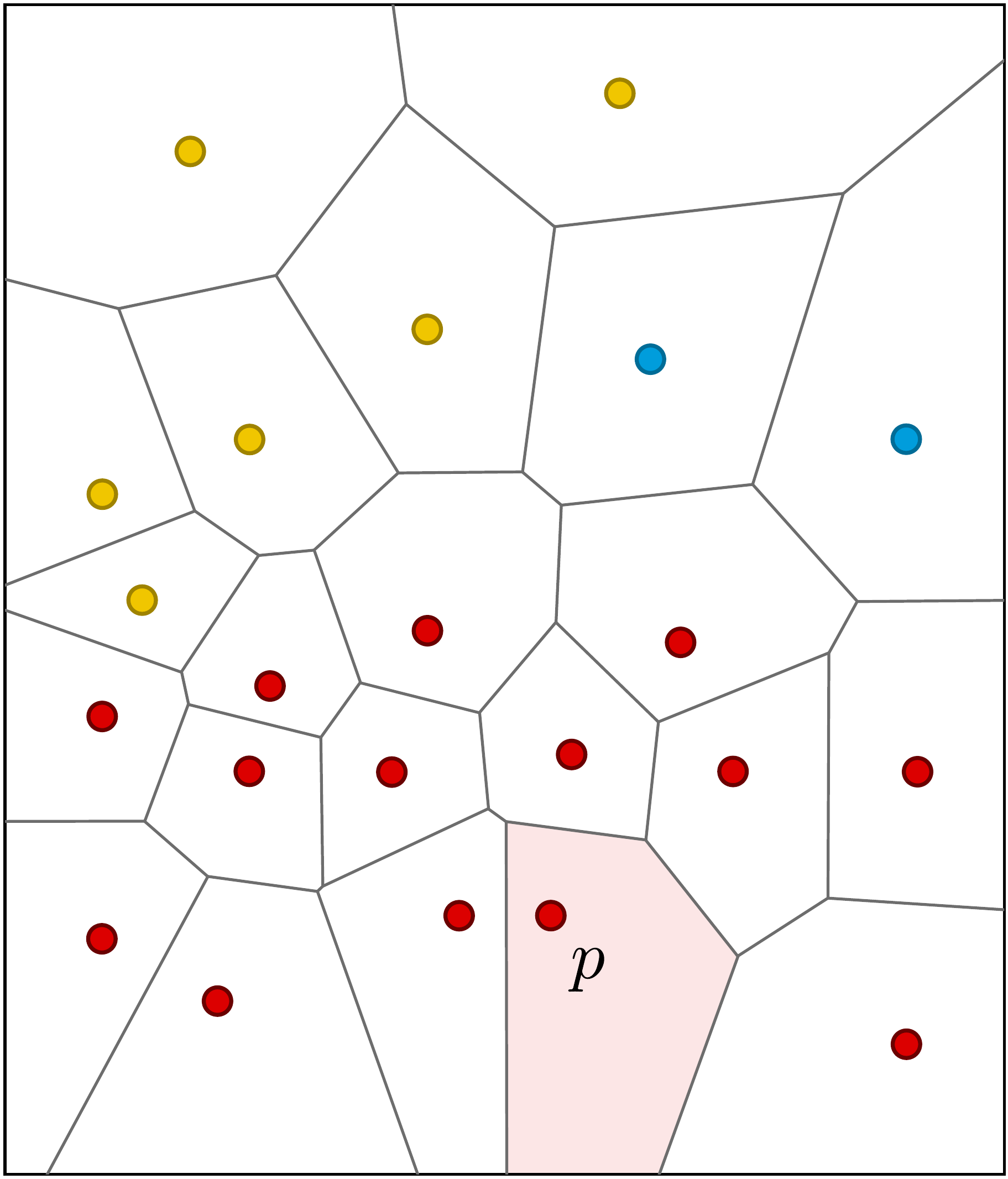}
        \caption{Training set \trainingSet}
		\label{fig:proof:seed-all}
    \end{subfigure}%
    \hfill
    \begin{subfigure}[b]{.32\linewidth}
        \centering\includegraphics[width=\textwidth]{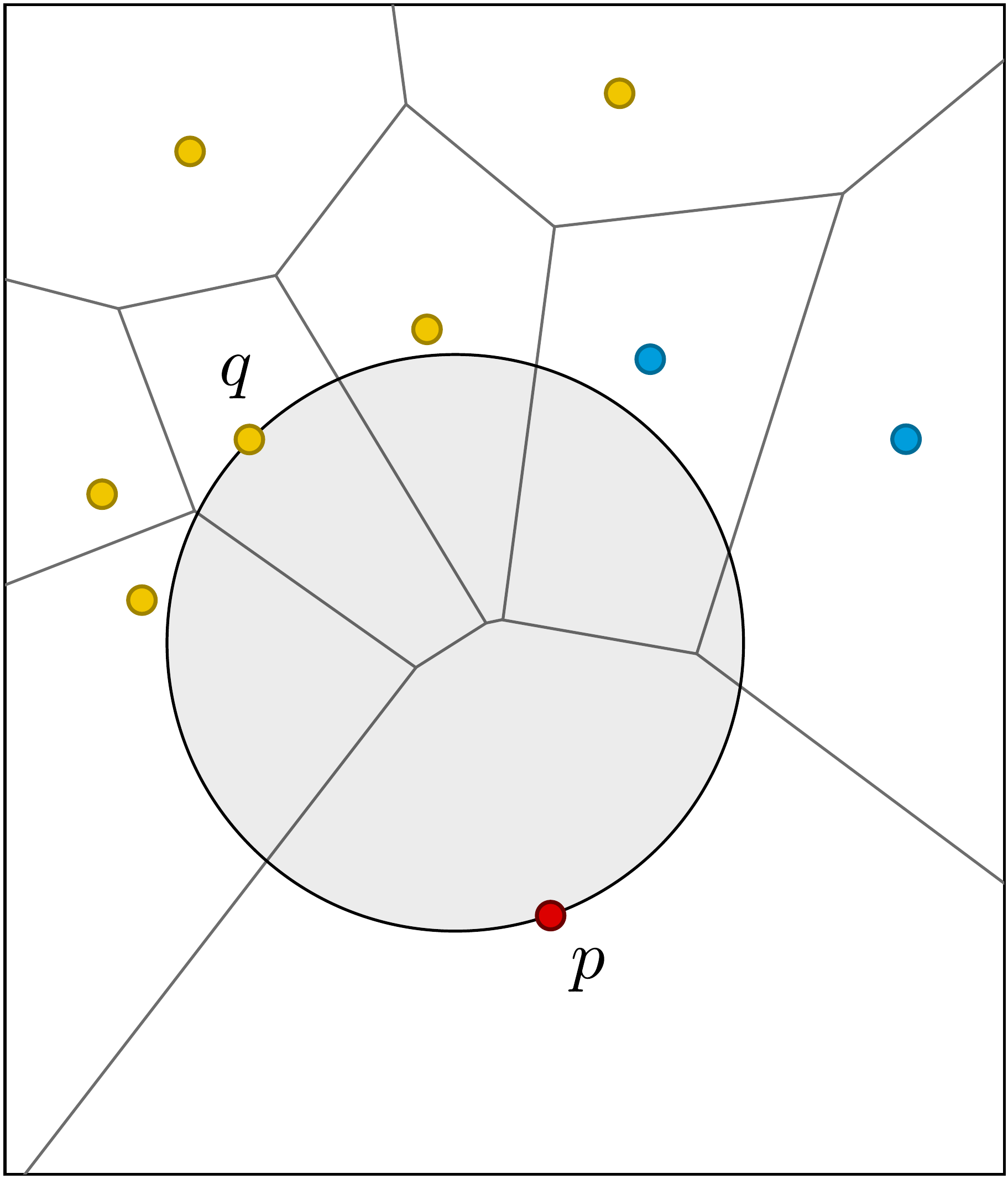}
        \caption{Points from $( \trainingSet \setminus \trainingSet_c ) \cap \{ p \}$}
		\label{fig:proof:seed-inversion}
    \end{subfigure}%
    \hfill
    \begin{subfigure}[b]{.32\linewidth}
        \centering\includegraphics[width=\textwidth]{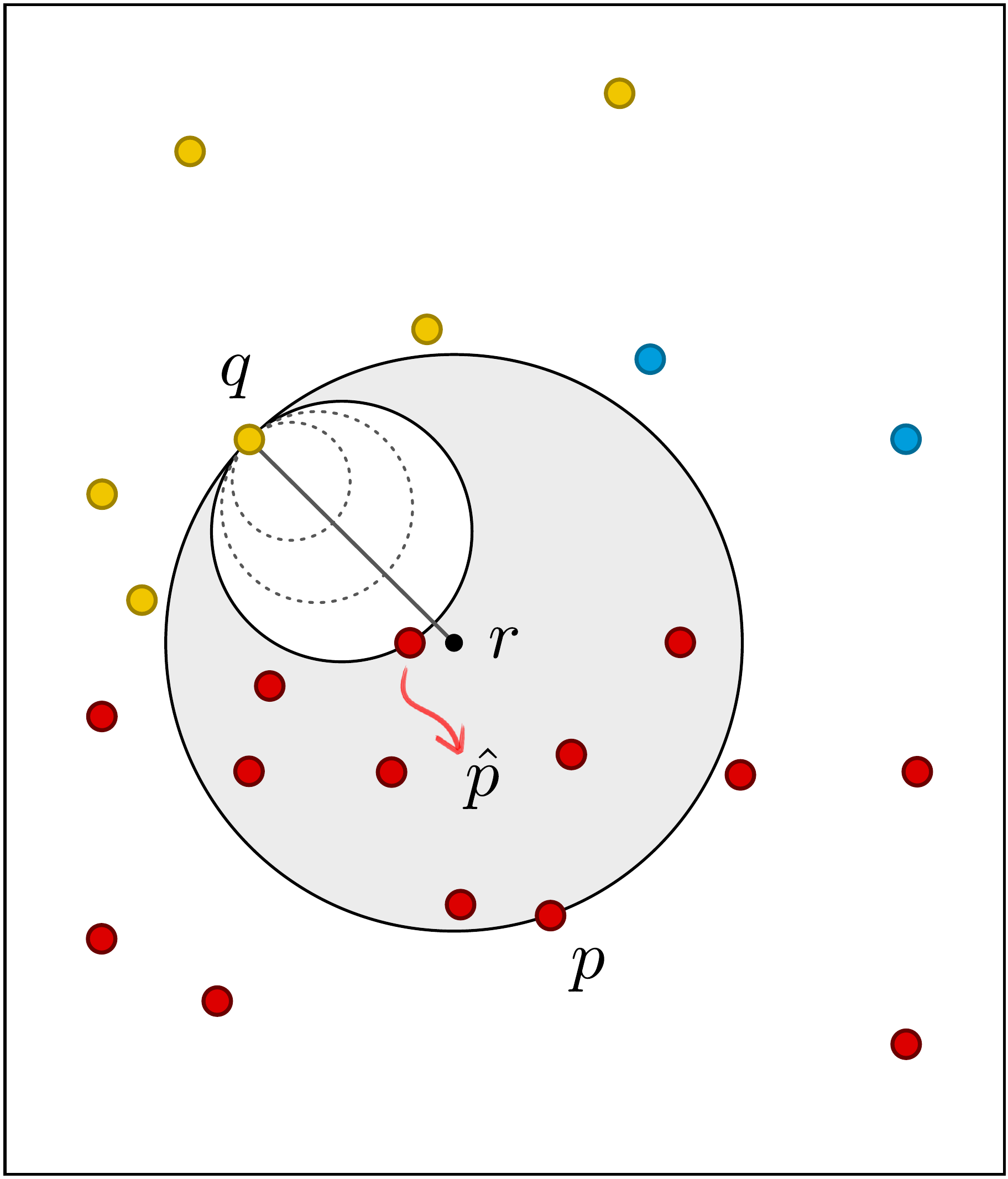}
        \caption{$q$ and $\hat{p}$ are border points}
		\label{fig:proof:seed-witness}
    \end{subfigure}%
    \caption{Example showing the inversion method from any point $p \in \trainingSet$. On the left, training set \trainingSet. The middle figure shows every non red point of \trainingSet, except for $p$ itself, along with a point $q$ selected from the inversion method on $p$. On the right, we see evidence that $q$ is a border point of \trainingSet.}
    \label{fig:proof:seed}
\end{figure*}

\begin{lemma}
\label{lemma:seed}
Let $p \in \trainingSet$ be any point of the training set. Then every point selected using the inversion method on $p$ must be a border point of \trainingSet.
\end{lemma}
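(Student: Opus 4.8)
The plan is to prove the statement directly from the characterization recalled right after Algorithm~\ref{alg:all-border-eppstein} (Lemma~3 of~\cite{eppstein2022finding}), never using that $p$ is a border point --- this is exactly the generalization of Lemma~6 of~\cite{eppstein2022finding} that we need. Fix any $\hat p \in E_p$ and set $c = \classOf{p}$. By that characterization $\hat p$ is a Delaunay neighbour of $p$ in $(\trainingSet\setminus\trainingSet_c)\cup\{p\}$, so there is a closed ball $B_0 = B(q_0,r_0)$ with $p$ and $\hat p$ on its boundary and no point of $(\trainingSet\setminus\trainingSet_c)\cup\{p\}$ in its interior; in particular $\classOf{\hat p}\neq c$. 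By the definition of border points from Section~\ref{sec:intro}, it then suffices to find a point $q\in\RE^d$ and a point $y\in\trainingSet$ with $\classOf{y}\neq\classOf{\hat p}$ such that $\hat p$ and $y$ are equidistant from $q$ while no point of \trainingSet lies strictly closer to $q$.

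The key observation is that the interior of $B_0$ can contain \emph{only} points of class $c$ other than $p$, since by construction it already avoids every point of $(\trainingSet\setminus\trainingSet_c)\cup\{p\}$. If it contains no point of \trainingSet at all, then $q=q_0$ and $y=p$ already work. Otherwise I would ``shrink'' the ball towards $\hat p$: for $t\in[0,1]$ let $q_t = (1-t)q_0 + t\hat p$ and $r_t=(1-t)r_0$, so $B_t=B(q_t,r_t)$ keeps $\hat p$ on its boundary; a one-line computation ($\|q_t-q_{t'}\| = |t-t'|\,r_0 = |r_t-r_{t'}|$) shows the family is nested, $\overline{B_{t'}}\subseteq\overline{B_t}$ for $t\le t'$, touching only at $\hat p$. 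For each of the finitely many points $y\in\trainingSet$ inside $B_0$ --- each of class $c$ and distinct from $p$ --- let $\tau_y\in(0,1)$ be the largest $t$ with $y\in\overline{B_t}$, and set $t^{\star}=\max_y\tau_y$, attained at some $\hat y$.

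I would then check that $q=q_{t^{\star}}$ and $y=\hat y$ are the required witness. Both $\hat p$ and $\hat y$ lie on $\partial B_{t^{\star}}$, and $\classOf{\hat y}=c\neq\classOf{\hat p}$; moreover the open ball $B_{t^{\star}}$ contains no point of \trainingSet, because it sits inside $B_0$ (hence can only hold points of class $c$ other than $p$) and each such $y$ is excluded by the fact that $y\notin\overline{B_t}$ for $t>\tau_y$, together with a short continuity argument ruling out a point of \trainingSet lying strictly inside $B_{t^{\star}}$. Consequently every point of \trainingSet is at distance at least $r_{t^{\star}}$ from $q_{t^{\star}}$, so the pair $(\hat p,\hat y)$ satisfies the definition of border points and $\hat p$ is a border point of \trainingSet.

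I expect the last verification to be the only delicate point: one must be sure that at the critical radius the boundary of the shrunk ball genuinely carries a point of $p$'s own class (so that it forms a valid bichromatic pair with $\hat p$) while its interior is empty of \trainingSet. Both facts reduce to the observation that $B_0^{\circ}$ holds only points of class $c$ --- a direct consequence of the Delaunay-neighbour description of $E_p$ from~\cite{eppstein2022finding} --- combined with the nesting of the family $\{B_t\}$ and a brief continuity argument; everything else is routine.
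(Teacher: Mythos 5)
Your proof is correct and takes essentially the same approach as the paper: both start from the empty ball witnessing that $\hat p$ is a Delaunay neighbor of $p$ in $(\trainingSet\setminus\trainingSet_c)\cup\{p\}$, observe that its interior can only contain class-$c$ points, and then slide a ball along the segment joining $\hat p$ to the original center (the paper ``grows'' it from $\hat p$ outward, you ``shrink'' it from $B_0$ inward --- the identical one-parameter family) until it is empty with respect to all of $\trainingSet$ and its boundary carries both $\hat p$ and some class-$c$ point. Your write-up just spells out the nesting and continuity checks that the paper's proof leaves implicit.
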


\begin{proof}
Let $E_p$ be the points of \trainingSet corresponding (before inversion) to the extreme points of $S_p$. According to Lemma~3~\cite{eppstein2022finding}, every point in $E_p$ is a neighbor of point $p$ with respect to the Voronoi Diagram of set $(\trainingSet \setminus \trainingSet_c) \cup \{ p \}$.
This implies that for every point $q \in E_p$ other than p, there exists a ball such that both $p$ and $q$ are on its surface and no points of $\trainingSet \setminus \trainingSet_c$ lie inside (see Figures~\ref{fig:proof:seed-all} and~\ref{fig:proof:seed-inversion}). We can now leverage similar techniques to the ones described in~\cite{Bremner2003}, to find a ``witness'' point to the hypothesis that $q$ must be a border point of \trainingSet.

Recall that the empty ball we just described, as illustrated in Figure~\ref{fig:proof:seed-inversion}, is empty from points of $\trainingSet \setminus \trainingSet_c$. However, there might be points of $P_c$ inside. And moreover, we know that at least one point of $P_c$, point $p$, lies on its surface. Now, let $r$ be the center of this ball, we grow an empty ball, this time with respect to the entire training set \trainingSet, such that its center lies on the line $\overline{qr}$ and point $q$ is on its surface (see Figure~\ref{fig:proof:seed-witness}). This ball will grow until it hits another point $\hat{p}$ of \trainingSet, which we are guaranteed it will be of the same class as point $p$, and thus, of different class as point $q$. Finally, we have just found an empty ball with respect to \trainingSet, which has points $q$ and $\hat{p}$ on its surface, and were the class of both points differ. Therefore, this implies that $q$ is a border point of \trainingSet.
\end{proof}

\begin{figure*}[ht]
    \centering
    \begin{subfigure}[b]{.32\linewidth}
        \centering\includegraphics[width=\textwidth]{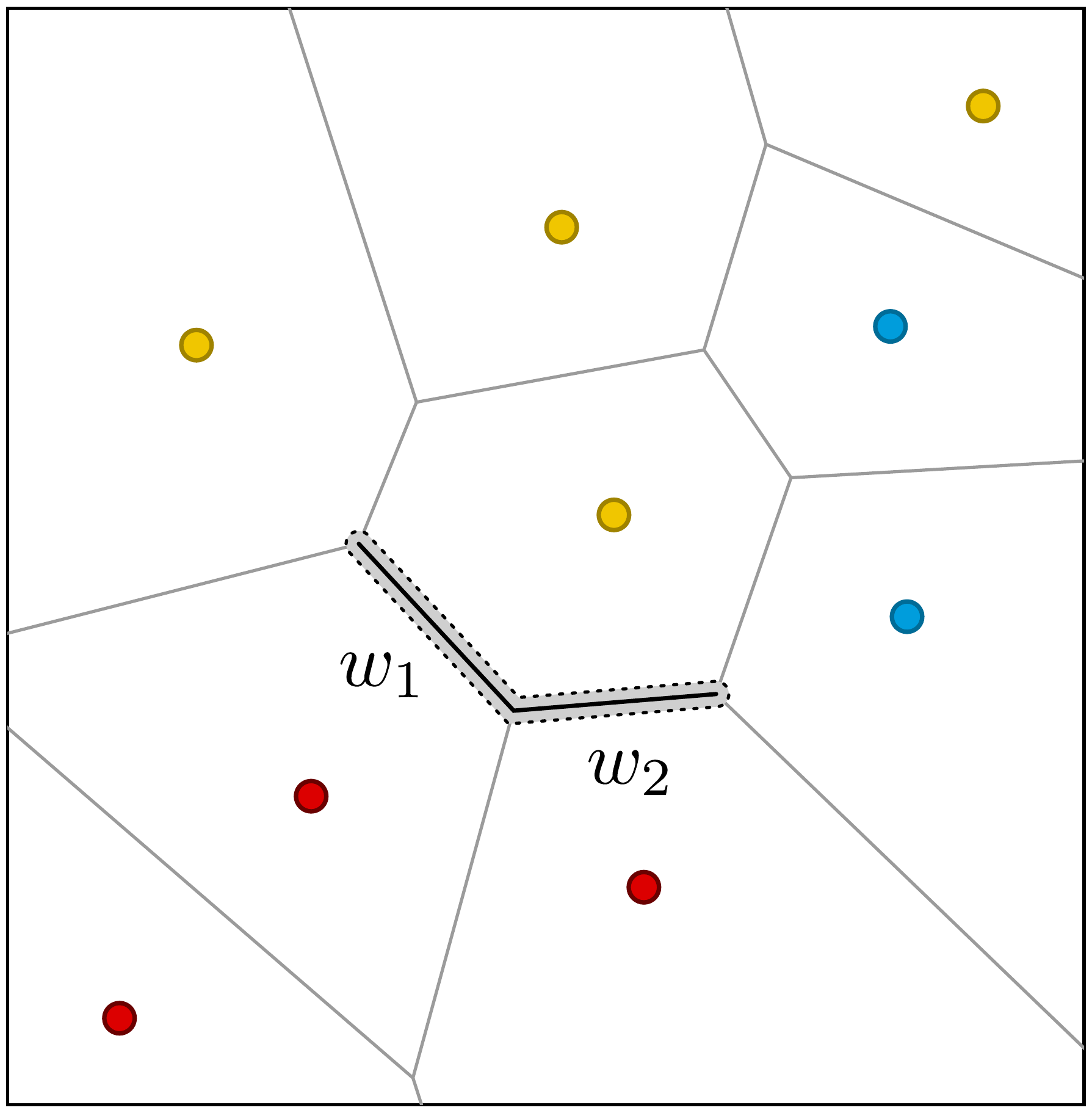}
        \caption{}
		\label{fig:proof:walls-adjacent}
    \end{subfigure}%
    \hspace*{0.2cm}
    \begin{subfigure}[b]{.32\linewidth}
        \centering\includegraphics[width=\textwidth]{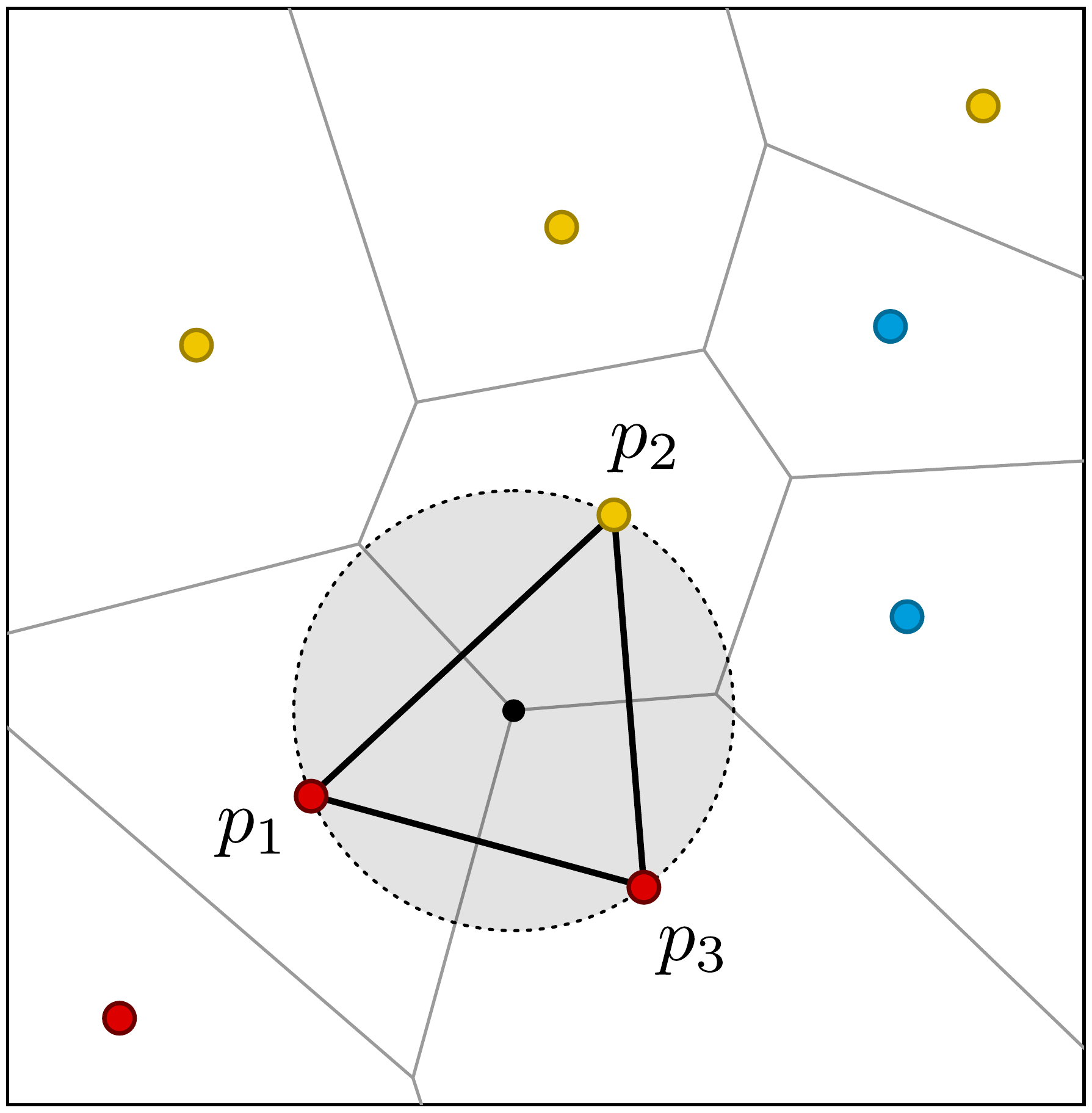}
        \caption{}
		\label{fig:proof:walls-triangle}
    \end{subfigure}%
    \caption{By definition, any two adjacent walls $w_1$ and $w_2$ of the Voronoi Diagram of \trainingSet hold the empty ball property with the points that define them. When these walls are part of the class boundaries of \trainingSet, the points that define them belong to at least two classes.}
    \label{fig:proof:walls}
\end{figure*}

Before continuing, we need to formally define a few concepts. First, we define a \emph{wall} of \trainingSet as any $(d-1)$-dimensional face of the Voronoi Diagram of \trainingSet. By known properties of these structures, every wall $w$ is \emph{defined} by two distinct points $p,q \in \trainingSet$ such that any point on $w$ has $p$ and $q$ as its two equidistant nearest-neighbors in the training set. We say two walls are \emph{adjacent} if their intersection is not empty. That is, if there exists a point in $\RE^d$ with all the defining points of these two walls as its equidistant nearest-neighbors in \trainingSet.

Additionally, we define a \emph{class boundary} (or just \emph{boundary}) of \trainingSet as the union of adjacent walls, where each of these walls is defined by two points of different classes. Similarly, we define a \emph{class region} of \trainingSet as the union of adjacent Voronoi cells whose defining points belong to the same class. Based on these definitions, note that class boundaries are the ones that separate different class regions of \trainingSet. Figure~\ref{fig:example:panoramic-set} illustrates a training set in $\RE^2$ with points of three classes, whose Voronoi Diagram describes five class regions and two class boundaries.

\begin{lemma}
\label{lemma:walls}
Let $w_1$ and $w_2$ be two adjacent walls in a class boundary of \trainingSet. If the algorithm selects one of the points defining one of these walls, it eventually selects the remaining points defining both walls.
\end{lemma}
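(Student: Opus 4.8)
The plan is to run the whole argument off one imported fact and one elementary observation. The imported fact is Lemma~3 of~\cite{eppstein2022finding}: applying the inversion method to a selected point $p$, with $c=\classOf{p}$, reports exactly the Voronoi neighbors of $p$ in the diagram of the site set $(\trainingSet\setminus\trainingSet_{c})\cup\{p\}$. The elementary observation is a monotonicity property of Voronoi adjacency: if $p$ and $q$ have adjacent cells in the Voronoi diagram of a site set $S$, then they still have adjacent cells in the Voronoi diagram of any $S'\subseteq S$ containing both, since every point of the $(d-1)$-dimensional face shared by their cells in $V(S)$ is equidistant to $p$ and $q$ and no closer to any site of $S'$. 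Granting these, I would first isolate a \emph{within-wall step}: if $w$ is a wall in a class boundary with defining points $u,v$ (so $\classOf{u}\neq\classOf{v}$) and the algorithm has selected $u$, then it selects $v$. Indeed, $u$ and $v$ are Voronoi neighbors in $V(\trainingSet)$ because they define $w$; since $\classOf{v}\neq\classOf{u}$ the site $v$ survives in $(\trainingSet\setminus\trainingSet_{\classOf{u}})\cup\{u\}$, and by the monotonicity observation it remains a Voronoi neighbor of $u$ there, so $v\in E_{u}$ once the loop processes $u$. Applying the within-wall step to $w_1$ and to $w_2$: as soon as the algorithm selects any one defining point of $w_1$, it selects both defining points of $w_1$ (and likewise for $w_2$). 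Thus it suffices to show that, holding both defining points of $w_1$, the algorithm eventually selects a defining point of $w_2$.

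If $w_1$ and $w_2$ share a defining point this is immediate, and this is the generic situation: two adjacent walls generically meet in a $(d-2)$-face of $V(\trainingSet)$, which is defined by three sites, and each of $w_1,w_2$ is defined by two of these three, so they share one. Hence the lemma is already settled whenever the defining sites are in general position.

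The substantive case is that $w_1$ and $w_2$ are adjacent but share no defining point, which forces a cospherical degeneracy. Here I would pick a point $z\in w_1\cap w_2$; by the definition of adjacency its equidistant nearest neighbors in \trainingSet form a set $T$ containing all four defining points $a,b,c',d'$, and an open ball $B$ centered at $z$ is empty of \trainingSet with $T\subseteq\partial B$. In a sufficiently small neighborhood of $z$, every Voronoi diagram occurring in the argument — that of \trainingSet and that of each site set $(\trainingSet\setminus\trainingSet_{c})\cup\{u\}$ — agrees with the Voronoi diagram of its sites lying on $\partial B$; and for a cospherical set of sites, two of them are Voronoi neighbors precisely when they span an edge of the convex hull of that set, a property that (again by the monotonicity observation, or directly, since deleting a vertex of a polytope cannot destroy an edge between two surviving vertices) is preserved when sites are deleted. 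Using that $\{a,b\}$ and $\{c',d'\}$ are themselves hull edges of $T$ (each pair defines a wall), and that any set of at most three distinct cospherical sites is in convex position with every pair a hull edge, I would trace a short path in the ``reported-by-inversion'' relation from a defining point of $w_1$ to a defining point of $w_2$: at each step I move to a site of different class from the current one that survives the corresponding class deletion and is hull-adjacent to the current site. Combined with the within-wall step, this forces all of $a,b,c',d'$ to be selected.

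The main obstacle is exactly this last case: showing that the ``reported-by-inversion'' relation stays connected on the few cospherical sites of $T$ after the class-dependent deletions, which calls for a small case analysis on how the classes of $a,b,c',d'$ — and of any further sites of $T$ — interact with the hull structure. Everything else, the within-wall step and the generic shared-vertex case, is immediate. I would also verify carefully the converse direction of the ``empty ball through exactly two sites'' $\Leftrightarrow$ ``these two sites are reported by the inversion method'' dictionary, since the proof of Lemma~\ref{lemma:seed} uses only one direction and the degenerate analysis leans on the other.
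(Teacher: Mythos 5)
Your proposal is incomplete at exactly the point you flag yourself. The within-wall step and the generic shared-vertex case are fine, and the monotonicity observation (a Voronoi/Delaunay adjacency between two surviving sites persists under deletion of other sites) is a genuinely useful piece that the paper leaves implicit; but the lemma is not proved until the case of adjacent walls sharing no defining point is handled, and for that case you give a plan (``I would trace a short path\ldots which calls for a small case analysis'') rather than an argument. That case analysis is where all the substance is, and it is not trivial to get right: after inverting the cospherical set $T$ around a selected site $p$, the surviving sites of $T$ land on a common hyperplane, and a site of different class from $p$ can still fail to be a vertex of the resulting lower-dimensional hull. In $\RE^2$, for cocircular sites $a,b,c,d$ in cyclic order with $a$ red, $b,c$ blue, $d$ green, the inversion on $a$ maps $b,c,d$ to collinear points with $c'$ strictly between $b'$ and $d'$, so the inversion on $a$ reports $b$ and $d$ but not $c$. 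Thus ``different class and surviving'' does not by itself imply ``reported''; to close the argument you must use the cyclic (or, in higher dimensions, facial) structure of the cospherical polytope and argue that the reported-by-inversion relation, restricted to the defining points of boundary walls around $z$, is connected.

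The paper's proof takes a more compressed route that avoids your case split: it sets $\mathcal{W}$ to be the union of the defining sets of $w_1$ and $w_2$, observes that adjacency yields one ball, empty with respect to $\trainingSet$, carrying all of $\mathcal{W}$ on its surface, and then argues in two steps—the first selected $p_1\in\mathcal{W}$ reports every point of $\mathcal{W}$ of a different class, and one such reported $p_2$ then reports the points of $\mathcal{W}$ in the class of $p_1$. This is shorter, but be aware that the first of those two steps is stated optimistically and runs into the very degeneracy you isolated: in the cocircular $a,b,c,d$ example above, $\mathcal{W}=\{a,b,c,d\}$ and the inversion on $p_1=a$ does not report $c$. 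The lemma's conclusion still holds (here $c$ is reached via a later inversion on $d$), but a fully rigorous write-up of either approach has to say something in this degenerate configuration. In short, your decomposition is sound and your instinct to single out the no-shared-vertex case as the hard one is correct; the proposal simply stops before doing the work that case actually requires.
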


\begin{proof}
Let $\mathcal{W}$ be the set of points defining both walls $w_1$ and $w_2$ (see Figure~\ref{fig:proof:walls}). By definition, these two walls of the Voronoi Diagram of \trainingSet are adjacent if there exists an empty ball with all the points of $\mathcal{W}$ on its surface. Knowing these two walls are part of the class boundaries of \trainingSet, the set $\mathcal{W}$ must contain at least three points, and at least two classes.

Let $p_1$ be the first point of $\mathcal{W}$ to be selected by the algorithm. When doing the inversion method on point $p_1$, the algorithm will select all points of $\mathcal{W}$ of different class than $p_1$, of which we know there is at least one. Let $p_2$ be one such point. Finally, when doing the inversion method on point $p_2$, the algorithm will select the remaining points of $\mathcal{W}$ of the same class as $p_1$. Therefore, all points of $\mathcal{W}$ will eventually be selected by the algorithm.
\end{proof}

%\alejandro{alternative/simpler def} A boundary is just the union of adjacent walls.

%We say that two walls $w_1$ and $w_2$ of the Voronoi Diagram of \trainingSet belong to the same \emph{class boundary} (or just \emph{boundary}), if the two points that define each wall belong to different classes, and there exists a path from $w_1$ to $w_2$ passing only between adjacent walls. Alternatively, if we consider each wall that separates two points of different classes as a node in some implicit graph $G$, where two nodes of this graph share an edge if their corresponding walls are adjacent, we say that $w_1$ and $w_2$ belong to the same boundary if their corresponding nodes in $G$ belong to the same connected component (see Figure~\ref{fig:example:panoramic-set} for an example).

\begin{figure}[ht]
    \centering\includegraphics[width=\textwidth]{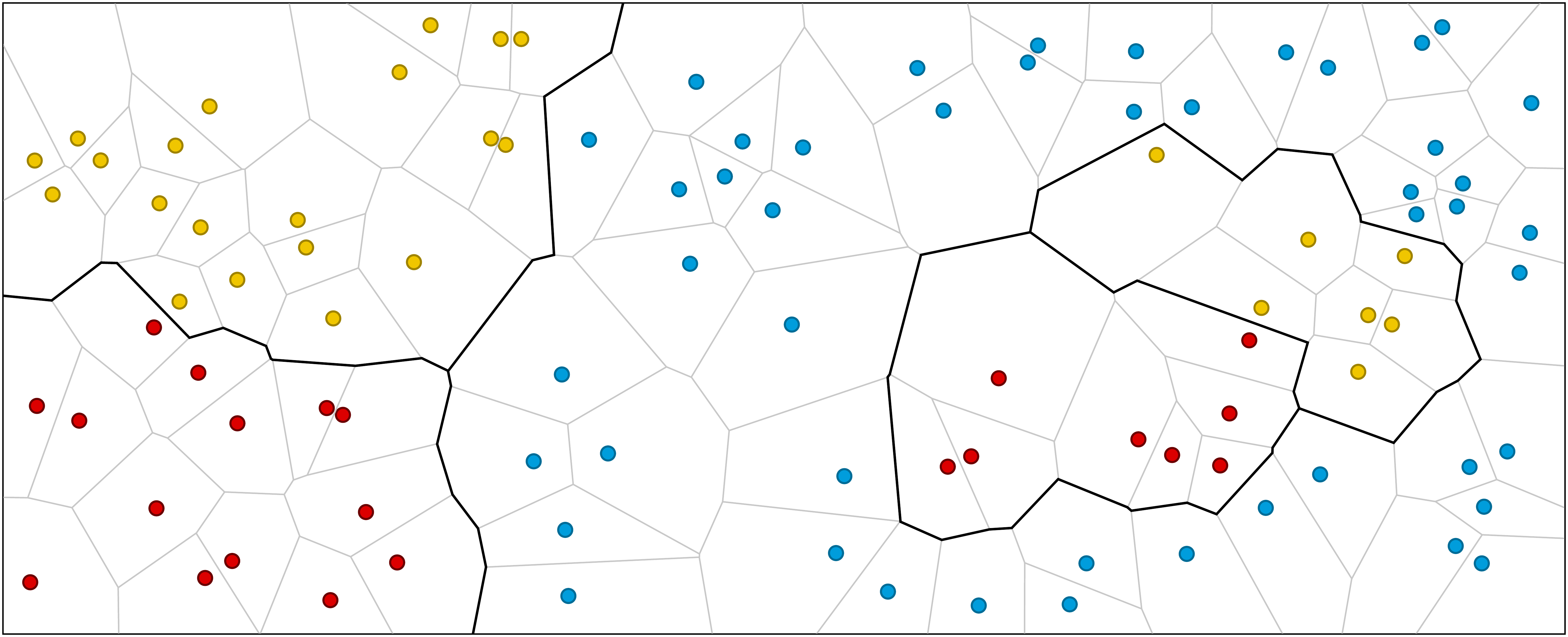}
    \caption{A training set with five class regions (one \emph{blue}, two \emph{red}, and two \emph{yellow} regions), along with two disconnected class boundaries that separate all these regions. On the left, a boundary separating the blue region and the leftmost yellow and red regions. On the right, a boundary that separates the same blue region and the remaining red and yellow regions.}
	\label{fig:example:panoramic-set}
\end{figure}

\begin{lemma}
\label{lemma:boundary}
Let $\mathcal{A}$ be a class boundary of \trainingSet, and assume that the algorithm selects one of the defining points of $\mathcal{A}$. Then, the algorithm will eventually select all defining points of $\mathcal{A}$.
\end{lemma}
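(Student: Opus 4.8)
The plan is to prove Lemma~\ref{lemma:boundary} by reducing the global statement about an entire class boundary $\mathcal{A}$ to the local statement already established in Lemma~\ref{lemma:walls}, using the connectivity of $\mathcal{A}$ as a union of walls. First I would set up the relevant combinatorial object: build an auxiliary graph $G_\mathcal{A}$ whose vertices are the walls of $\mathcal{A}$, with an edge between two walls whenever they are adjacent (their intersection is nonempty, i.e.\ there is a point of $\RE^d$ having all their defining points as equidistant nearest neighbors in \trainingSet). Since $\mathcal{A}$ is by definition the union of adjacent walls, and ``boundary'' should be understood as a single connected such union, $G_\mathcal{A}$ is connected. (If one prefers to treat $\mathcal{A}$ as possibly disconnected, then one restricts attention to a single connected component of walls, which is exactly what the word ``boundary'' is meant to capture; Figure~\ref{fig:example:panoramic-set} shows the two-boundary case that Lemma~\ref{lemma:boundary-jump} later handles.)

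Next I would argue by induction over $G_\mathcal{A}$. The hypothesis gives us that the algorithm selects some defining point of $\mathcal{A}$; that point defines at least one wall $w_0$ of $\mathcal{A}$, and in fact—since walls of a boundary are defined by bichromatic pairs and each selected point triggers the inversion method, which by Lemma~\ref{lemma:seed} reports only border points—I would first observe that once any single defining point of a wall $w_0 \in \mathcal{A}$ is selected, Lemma~\ref{lemma:walls} applied to $w_0$ together with any wall adjacent to it shows that \emph{all} defining points of $w_0$ get selected (take $w_1 = w_2 = w_0$, or pair $w_0$ with a neighbor). So the base case is: all defining points of at least one wall $w_0$ are eventually selected. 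For the inductive step, suppose all defining points of some wall $w$ in a connected sub-collection have been selected, and let $w'$ be a wall of $\mathcal{A}$ adjacent to $w$. Applying Lemma~\ref{lemma:walls} to the adjacent pair $w_1 = w$, $w_2 = w'$: the algorithm has already selected a defining point of $w$, hence it eventually selects the remaining defining points of both $w$ and $w'$—in particular all defining points of $w'$. By connectivity of $G_\mathcal{A}$, iterating this along a path from $w_0$ to any wall of $\mathcal{A}$ shows every wall of $\mathcal{A}$ eventually has all its defining points selected, and since every defining point of $\mathcal{A}$ is a defining point of some wall of $\mathcal{A}$, we are done.

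One subtlety I would be careful about is the finiteness/termination of this iteration: the argument ``eventually'' is really a transfinite-sounding chain of ``eventually''s, but since \trainingSet is finite there are only finitely many walls and finitely many defining points, so the process of selecting points (each of which is added permanently to \reducedSet) stabilizes after finitely many rounds, and the propagation along any finite path in $G_\mathcal{A}$ completes in finitely many applications of Lemma~\ref{lemma:walls}. I would state this explicitly to make the induction rigorous. A second point worth a sentence is that Lemma~\ref{lemma:walls} as stated requires both $w_1$ and $w_2$ to lie in a class boundary (so that $\mathcal{W}$ contains at least two classes); this is automatically satisfied here because every wall of $\mathcal{A}$ is by definition bichromatic, and an adjacent wall $w'$ of $\mathcal{A}$ is likewise in $\mathcal{A}$, hence bichromatic.

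The main obstacle, such as it is, is not a hard estimate but getting the definitions to line up cleanly—specifically, making sure that ``boundary'' is genuinely a \emph{connected} union of walls (so that $G_\mathcal{A}$ is connected and the induction has something to walk along), and that the ``adjacency'' relation used in Lemma~\ref{lemma:walls} is exactly the edge relation of $G_\mathcal{A}$. Once that bookkeeping is pinned down, the proof is a short connectivity induction layered on top of Lemma~\ref{lemma:walls}, with Lemma~\ref{lemma:seed} only needed implicitly to know that everything being selected is in fact a border point (and hence that the inversion method keeps being applied to these points in later iterations of the \textbf{for} loop of Algorithm~\ref{alg:all-border-mine}).
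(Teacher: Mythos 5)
Your proof is correct and follows the same route the paper gestures at: the paper dispatches Lemma~\ref{lemma:boundary} as a ``direct consequence'' of Lemma~\ref{lemma:walls} together with the definition of a class boundary as a connected union of adjacent bichromatic walls, and your connectivity induction over the wall-adjacency graph $G_\mathcal{A}$ is precisely the argument being left implicit. You simply supply the finiteness/termination bookkeeping and the observation that adjacent walls in $\mathcal{A}$ are both bichromatic, which are exactly the details the paper elides.
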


This comes as a direct consequence of Lemma~\ref{lemma:walls} and the definition of a class boundary of the training set \trainingSet. It remains to show what happens with boundaries that are disconnected.

\begin{lemma}
\label{lemma:boundary-jump}
Let $\mathcal{A}$ and $\mathcal{B}$ be two disconnected boundaries of \trainingSet, such that there exists a path in space from $\mathcal{A}$ to $\mathcal{B}$ that is completely contained within one color region. Without loss of generality, say that every point that defines $\mathcal{A}$ has been selected by the algorithm. Then, every point that defines $\mathcal{B}$ must also be selected by the algorithm.
\end{lemma}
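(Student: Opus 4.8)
The plan is to show that the algorithm selects \emph{at least one} point defining a wall of $\mathcal{B}$; Lemma~\ref{lemma:boundary} then upgrades this to ``all defining points of $\mathcal{B}$''. Let $c$ be the class of the region $\mathcal{R}$ that contains the path, and assume (after a harmless perturbation) that the path is in general position with respect to the Voronoi diagram of \trainingSet and touches the boundaries of \trainingSet only at its two endpoints. The endpoint on $\mathcal{A}$ lies in the relative interior of a wall of $\mathcal{A}$ whose defining points are a class-$c$ point $p_1$ and a point of another class; both are defining points of $\mathcal{A}$, hence already selected.

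The engine is a ``march'' along the path that alternates the inversion method between class-$c$ points and points of other classes. Suppose $p$ is a selected class-$c$ point whose cell in \trainingSet touches a boundary. By Lemma~3 of~\cite{eppstein2022finding}, running the inversion method on $p$ reports $E_p$, the neighbors of $p$'s cell $V^-(p)$ in the Voronoi diagram of $(\trainingSet\setminus\trainingSet_c)\cup\{p\}$. Follow the path from the side where $p$'s cell sits: it starts inside $V^-(p)$, since $V^-(p)$ contains $p$'s cell in \trainingSet. If the path never leaves $V^-(p)$, then its other endpoint $b$ lies in $V^-(p)$; writing $p',q'$ for the class-$c$ and non-$c$ defining points of the wall of $\mathcal{B}$ containing $b$, membership in $V^-(p)$ gives $\dist{p}{b}\le\dist{q'}{b}=\dist{p'}{b}$, and general position (the nearest neighbors of $b$ are exactly $p'$ and $q'$) forces $p=p'$, so $p$ already defines a wall of $\mathcal{B}$ and we are done. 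Otherwise the path first leaves $V^-(p)$ across a wall defined by $p$ and some $q\in E_p$, so $q$ is selected; by Lemma~\ref{lemma:seed} it is a border point, and the empty-ball growth in the proof of Lemma~\ref{lemma:seed} produces a ball empty of \emph{all} of \trainingSet whose surface passes through $q$ and a class-$c$ border point $\hat p$. Since that witnessing ball is empty of all of \trainingSet, running the inversion method on $q$ reports $\hat p$, so $\hat p$ is also selected, and $q$ and $\hat p$ together define a wall of some boundary $\mathcal{C}$ that borders $\mathcal{R}$.

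To make the march terminate I would take a minimal counterexample: among all choices of a color region, two boundaries $\mathcal{A}',\mathcal{B}'$ bordering it with $\mathcal{A}'$ fully selected and $\mathcal{B}'$ not, and a connecting path inside the region, pick one minimizing the number $m$ of region cells the path crosses, say $v_1,\dots,v_m$. If $m=1$, the single defining point of $v_1$ defines walls of both boundaries, and Lemma~\ref{lemma:boundary} gives a contradiction, so $m\ge2$. Running the mechanism above on $p_1$ (whose cell is $v_1$), the path leaves $V^-(p_1)$ at a point $z$ lying in some cell $v_j$ with $j\ge2$, and we obtain a selected point $q\in E_{p_1}$ and a selected class-$c$ point $\hat p$, both on a fully selected boundary $\mathcal{C}$ bordering the region. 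Splicing a short path from a wall of $\mathcal{C}$ through $\hat p$'s cell, then back along the segment $\overline{zq}$ to $v_j$, and then the tail $v_j,\dots,v_m$ of the original path, produces a connecting path from the fully selected $\mathcal{C}$ to $\mathcal{B}'$; one then shows it crosses fewer than $m$ region cells, contradicting minimality — so every such $\mathcal{B}'$, in particular our $\mathcal{B}$, is eventually selected.

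The main obstacle is this final bookkeeping. The tail $v_j,\dots,v_m$ contributes at most $m-j+1\le m-1$ region cells, so everything hinges on showing that the detour — from $\hat p$'s cell back along $\overline{zq}$ to $v_j$ — introduces no region cell outside $\{v_j,\dots,v_m\}$. One fact is available for free: since $z$ lies on the perpendicular bisector of $p_1$ and $q$ while $q$ is the nearest point of another class to $z$, the segment $\overline{zq}$ stays inside a ball that is empty of all points of classes other than $c$; hence $\overline{zq}$ meets only class-$c$ cells of a single region (in fact of $\mathcal{R}$) and the cell of $q$, and the empty-ball growth of Lemma~\ref{lemma:seed} shows the first class-$c$ cell it enters from $q$ is exactly $\hat p$'s. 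What remains is to control which class-$c$ cells lie between $\hat p$'s cell and $v_j$ along $\overline{zq}$ — ideally by choosing the exit point, the growth direction of the empty ball, or the minimal path so that these cells are among $v_j,\dots,v_m$ (equivalently, so that $\hat p$'s cell is one of $v_j,\dots,v_m$, or that $q$'s cell is Voronoi-adjacent to $v_j$). Pinning down a clean version of this last step is the technical heart of the lemma.
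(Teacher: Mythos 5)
Your proposal has the right high-level intuition—march along the path and alternate the inversion method between class-$c$ and non-$c$ points—but it takes a genuinely different route from the paper, and that route currently has a gap you yourself flag. The paper never performs a data-dependent incremental march. Instead, it deletes \emph{all} points of the crossing class from $\trainingSet$ and walks the path through the Voronoi diagram of what remains, identifying up front the finitely many crossing points $r_i$ where the path passes between cells of two non-$c$ border points $q_{i,1}$ and $q_{i,2}$ that define distinct boundaries. For each crossing it then shows that a deleted class-$c$ point $q_{i,3}$ sits inside $r_i$'s ball and witnesses an empty ball (with respect to $(\trainingSet\setminus\trainingSet_c)\cup\{q_{i,3}\}$) through $q_{i,3}$ and $q_{i,2}$, so the inversion method on $q_{i,3}$ reports $q_{i,2}$. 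The sequence $p_i = q_{i,3}$, $\hat p_i = q_{i,2}$ is thus fixed once and for all, termination is immediate because the path crosses finitely many walls of the reduced diagram, and Lemma~\ref{lemma:boundary} is invoked to flood each intermediate boundary. No minimal counterexample or cell-counting is needed.

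By contrast, your march re-chooses the next boundary dynamically based on where the path exits $V^-(p)$ and where the empty-ball growth lands, and you need the ``number of region cells crossed'' to strictly decrease. You explicitly acknowledge that you cannot yet control which class-$c$ cells the detour through $\hat p$'s cell and $\overline{zq}$ visits, and whether $\hat p$'s cell is among $v_j,\dots,v_m$. That is precisely the hole: without it, the spliced path need not cross fewer region cells, and the minimality argument collapses. The paper's static decomposition sidesteps this entirely—the intermediate boundaries are read off from a single fixed Voronoi diagram rather than discovered one inversion at a time—so if you want to salvage your version, the cleanest fix is to adopt that viewpoint: remove all class-$c$ points, let the path index the sequence of crossings, and only then bring back a class-$c$ point per crossing to perform each jump.
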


\begin{proof}
Given these two disconnected boundaries $\mathcal{A}$ and $\mathcal{B}$, we assume there exists some path $\mathcal{P}$ in $\RE^d$ going from a wall of $\mathcal{A}$ to a wall of $\mathcal{B}$, such that this path passes exclusively through a single class region (see Figure~\ref{fig:proof:boundaries-regions}). Without loss of generality, say this is a \emph{red} class region. Formally, for every point $r$ along $\mathcal{P}$ we know $r$'s nearest-neighbor in \trainingSet is red. Additionally, we assume that every border point defining $\mathcal{A}$ is selected by the algorithm. Hence, the proof consists of showing that there exists a sequence of border points $\langle p_1, \hat{p}_1, p_2, \hat{p}_2, \dots, p_m, \hat{p}_m \rangle$ such that
\begin{inparaenum}[(i)]
    \item $p_1$ and $\hat{p}_m$ are defining points of $\mathcal{A}$ and $\mathcal{B}$, respectively,
    \item $\hat{p}_i$ is retrieved by the inversion method on $p_i$, for every $i \in [1,m]$, and finally
    \item points $p_i$ and $\hat{p}_{i-1}$ are both defining the same boundary, for every $i \in [2,m]$.
\end{inparaenum}
See Figure~\ref{fig:proof:boundaries} for a visual description.

\begin{figure*}[ht]
    \centering
    \begin{subfigure}[b]{.32\linewidth}
        \centering\includegraphics[width=\textwidth]{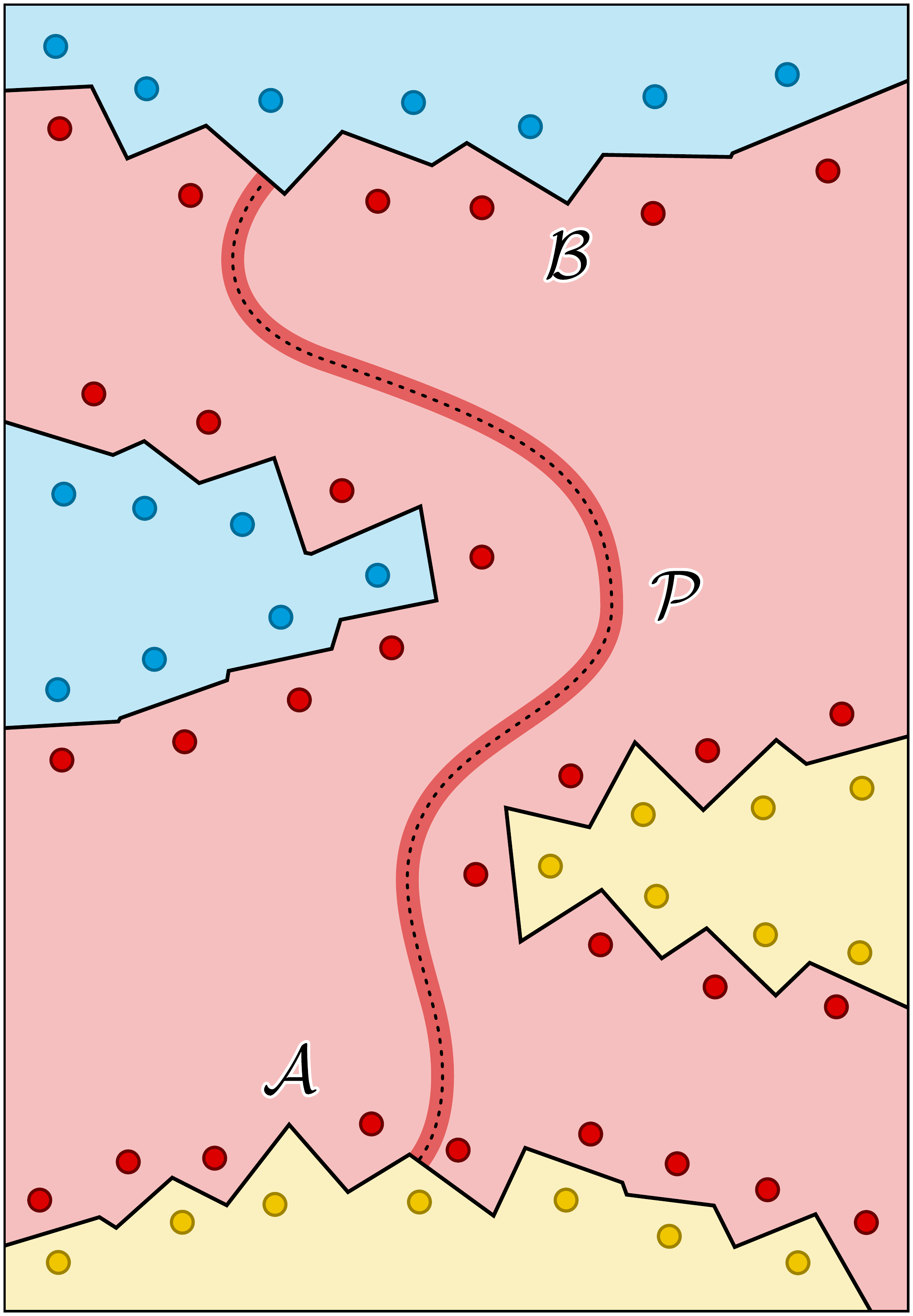}
        \caption{}
		\label{fig:proof:boundaries-regions}
    \end{subfigure}%
    \hfill
    \begin{subfigure}[b]{.32\linewidth}
        \centering\includegraphics[width=\textwidth]{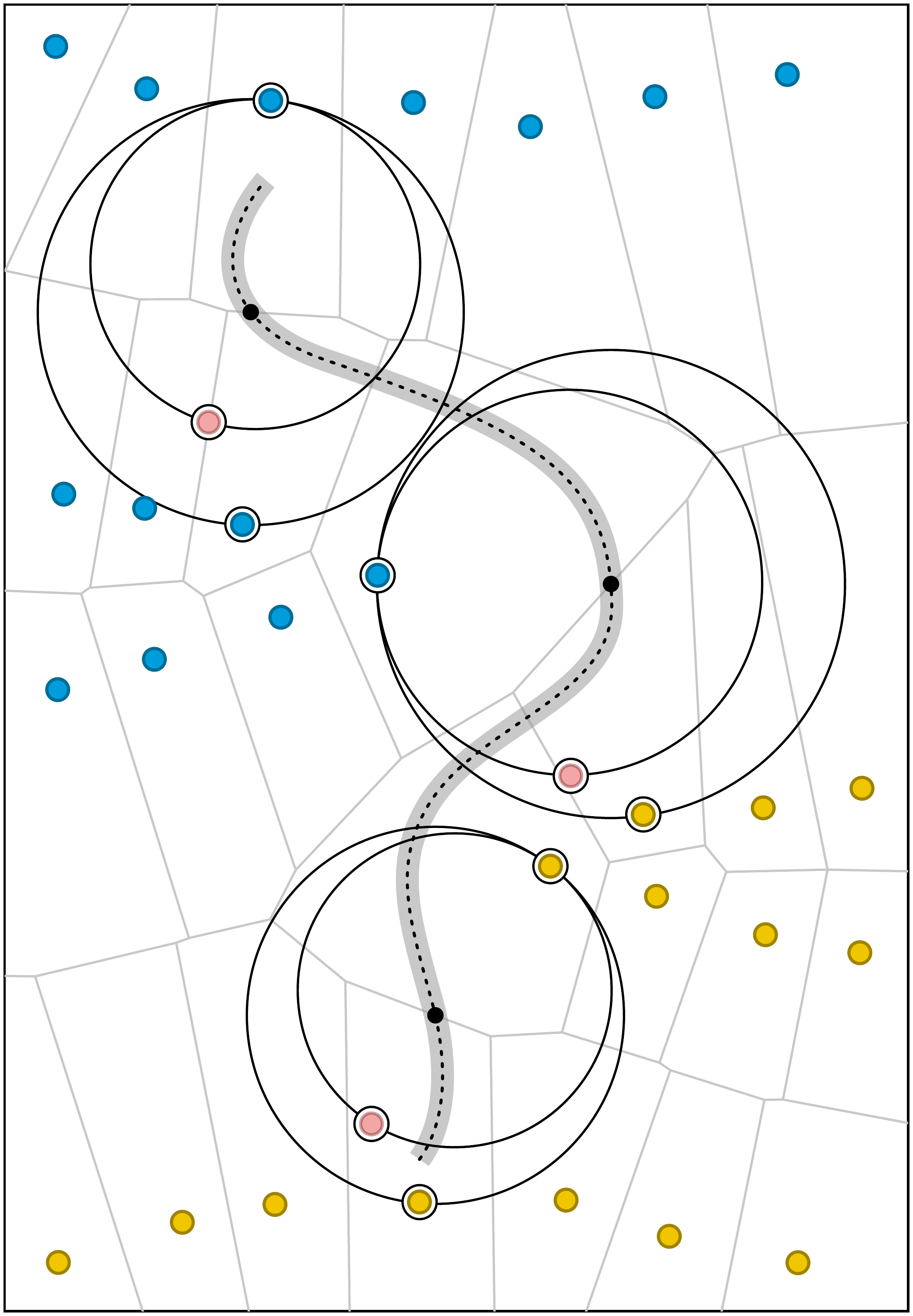}
        \caption{}
		\label{fig:proof:boundaries-path}
    \end{subfigure}%
    \hfill
    \begin{subfigure}[b]{.32\linewidth}
        \centering\includegraphics[width=\textwidth]{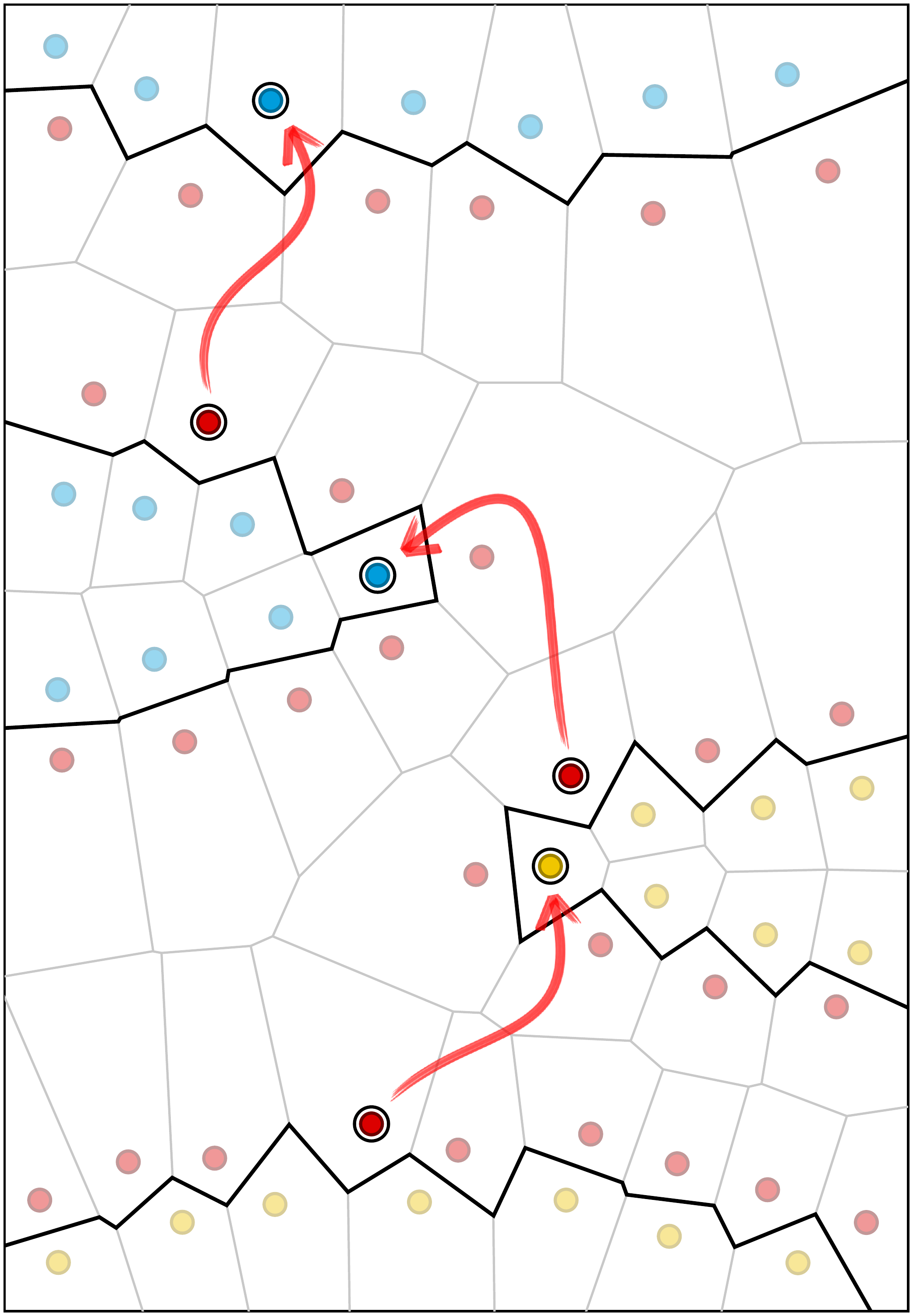}
        \caption{}
		\label{fig:proof:boundaries-jump}
    \end{subfigure}%
    \caption{On the right, two disconnected boundaries $\mathcal{A}$ and $\mathcal{B}$ enclosing a red class region. Thus, there is a path $\mathcal{P}$ completely contained inside such region and connecting both boundaries. Other boundaries can also be enclosing the same region and be near path $\mathcal{P}$. On the left, we proof that there exists a sequence of points that can be retrieved by calls to the inversion method, such that if points of $\mathcal{A}$ are selected by the algorithm, eventually points of $\mathcal{B}$ will also be selected.}
    \label{fig:proof:boundaries}
\end{figure*}

%First, let's only consider the border points of \trainingSet and delete every other training point. By definition, this does not change the boundaries between classes, and therefore, the region between $\mathcal{A}$ and $\mathcal{B}$ remains the same. Evidently, for every point $r$ along path $\mathcal{P}$ we know $r$'s nearest-neighbor is now a red border point of \trainingSet. Let's now delete all remaining red points, including the ones that define $\mathcal{A}$ and $\mathcal{B}$ (see Figure~\ref{fig:proof:boundaries-path}). This immediately implies that $r$'s nearest-neighbor just became some non-red border point of \trainingSet. Note that these border points could be defining other boundaries apart from $\mathcal{A}$ and $\mathcal{B}$, as seen in the figure.

By definition, for every point $r$ along path $\mathcal{P}$ we know $r$'s nearest-neighbor is a red point. Now, let's delete every red point from consideration, including the ones defining boundaries $\mathcal{A}$ and $\mathcal{B}$ (see Figure~\ref{fig:proof:boundaries-path}). This immediately implies that $r$'s nearest-neighbor just became a non-red border point of \trainingSet. The fact that $r$'s new nearest-neighbor is a border point is easy to proof, using similar arguments as the ones laid down in Lemma~\ref{lemma:seed}. Additionally, these border points could be defining other boundaries apart from $\mathcal{A}$ and $\mathcal{B}$, as seen in Figure~\ref{fig:proof:boundaries-path}.

Let's start moving along the path $\mathcal{P}$, starting from the end-point of the path that lies on a wall of boundary $\mathcal{A}$. Then, find all $r_i$ points along the path, where each $r_i$ has two equidistant nearest-neighbors among the remaining non-red points, and both points define two distinct boundaries of \trainingSet. We say there are $m$ of these points along the path, and denote $r_i$'s two equidistant nearest-neighbors as $q_{i,1}$ and $q_{i,2}$ for $i \in [1,m]$. Clearly, $q_{i,1}$ and $q_{i-1,2}$ are border points defining the same boundary, for all $i \in [2,m]$. See Figure~\ref{fig:proof:boundaries-path}, where the three black points along the path are the $r_i$ points, and the \emph{yellow} and \emph{blue} points on the surface of the balls centered at each $r_i$ are the corresponding $q_{i,1}$ and $q_{i,2}$ points.

For now, let's fix the analysis on one such $r_i$ point, and consider the ball centered at $r_i$ with both $q_{i,1}$ and $q_{i,2}$ on its surface. There must exist some other point $q_{i,3}$ lying inside of $r_i$'s ball, such that $q_{i,3}$ is one of the deleted red points defining the same boundary as $q_{i,1}$. 
%For example, an easy way to obtain this point would be to grow an empty ball (with respect to the entire training set \trainingSet), with its center on the line $\overline{r_i q_{i,1}}$ and with point $q_{i,1}$ on its surface. This will eventually encounter a red border point, otherwise $r_i$ wouldn't be originally inside of a red class region.
It is now easy to see that there exist an empty ball, with respect to the set $\trainingSet \setminus \trainingSet_\emph{red} \cup \{ q_{i,3} \}$, with both $q_{i,3}$ and $q_{i,2}$ on its boundary. This implies that $q_{i,2}$ is retrieved by the inversion method on $q_{i,3}$.
Therefore, let's add $p_i \gets q_{i,3}$ and $\hat{p}_i \gets q_{i,2}$ to the sequence of points that we are looking for. Repeat this for every $r_i$ with $i \in [1,m]$ to identify all points in the sequence.

Finally, we have the sequence of border points $\langle p_1, \hat{p}_1, p_2, \hat{p}_2, \dots, p_m, \hat{p}_m \rangle$ such that for any $i \in [1,m]$ assuming that the algorithm selects the points defining the same boundary as $p_i$, it will also select $\hat{p}_i$, and leveraging Lemma~\ref{lemma:boundary} it will eventually select all other points defining the same boundary as $\hat{p}_i$. Given that $p_1$ and $\hat{p}_m$ are defining border points of boundaries $\mathcal{A}$ and $\mathcal{B}$, respectively, and by the assumption that all points defining $\mathcal{A}$ are selected by the algorithm, we know that eventually, all points defining $\mathcal{B}$ will be selected too.
\end{proof}

\begin{theorem}
\label{thm:main}
The algorithm selects every border point of \trainingSet in $\bigOh( n \numBorder^2 )$ time.
\end{theorem}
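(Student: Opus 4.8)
The plan is to verify three properties of Algorithm~\ref{alg:all-border-mine} in turn: that \reducedSet only ever receives border points (correctness), that it receives all of them (completeness), and that the total work is $\bigOh(n\numBorder^2)$. Correctness is immediate from Lemma~\ref{lemma:seed}, since every point ever added to \reducedSet is returned by the inversion method applied to some point of \trainingSet. Consequently $|\reducedSet|\le\numBorder$ throughout, so, running the main loop as a worklist (process $s$ once, and each point once when it first enters \reducedSet), the algorithm performs at most $\numBorder+1$ inversion calls and then terminates, since no new point can appear. The degenerate case $\numBorder=0$ is trivial: then \trainingSet is monochromatic, the inversion method on $s$ has nothing of a different class to invert and returns the empty set, so the algorithm correctly outputs $\emptyset$; assume $\numBorder\ge 1$ henceforth.

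For completeness I would first observe that the seed's call already reaches a class boundary. Since $\numBorder\ge 1$ there are at least two classes, so $\trainingSet\setminus\trainingSet_c\ne\emptyset$ where $c$ is the class of $s$; hence the set $E_s$ returned by the inversion method on $s$ --- the Delaunay neighbors of $s$ with respect to $(\trainingSet\setminus\trainingSet_c)\cup\{s\}$ --- is nonempty, and by Lemma~\ref{lemma:seed} each of its points is a border point, that is, a defining point of some boundary $\mathcal{A}_0$. Lemma~\ref{lemma:boundary} then gives that every defining point of $\mathcal{A}_0$ is eventually selected. It remains to ``spread'' from $\mathcal{A}_0$ to all other boundaries. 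To organize this, I would introduce an auxiliary graph $H$ whose vertices are the class boundaries of \trainingSet, with an edge between $\mathcal{A}$ and $\mathcal{B}$ whenever there is a path in $\RE^d$ from a wall of $\mathcal{A}$ to a wall of $\mathcal{B}$ lying inside a single class region. By Lemma~\ref{lemma:boundary-jump} combined with Lemma~\ref{lemma:boundary}, if all defining points of one endpoint of an $H$-edge are selected then all defining points of the other endpoint are eventually selected; inducting along paths in $H$, once $\mathcal{A}_0$ is fully selected so is every boundary in its $H$-component. Thus completeness reduces to showing that $H$ is connected --- and then, since every border point defines some boundary, every border point is selected.

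To see that $H$ is connected, take arbitrary boundaries $\mathcal{A}$ and $\mathcal{B}$, pick walls $w_A\subseteq\mathcal{A}$ and $w_B\subseteq\mathcal{B}$, and join a point just off $w_A$ (on the side of one of its two defining points) to a point just off $w_B$ by a path $\gamma\subset\RE^d$; after a generic perturbation $\gamma$ meets only $(d-1)$-faces of the Voronoi diagram of \trainingSet, crossing them one at a time. Collapsing maximal runs of equal-class Voronoi cells visited by $\gamma$ into class regions, $\gamma$ traverses regions $\mathcal{R}_0,\dots,\mathcal{R}_t$ where the wall separating $\mathcal{R}_{i-1}$ from $\mathcal{R}_i$ is bichromatic and hence lies on a single boundary $\mathcal{C}_i$, and where $w_A\subseteq\partial\mathcal{R}_0$ and $w_B\subseteq\partial\mathcal{R}_t$. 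Because each class region is connected, inside $\mathcal{R}_0$ one can join $w_A$ to the crossing wall on $\mathcal{C}_1$, inside each $\mathcal{R}_i$ with $1\le i\le t-1$ the crossing walls on $\mathcal{C}_i$ and $\mathcal{C}_{i+1}$, and inside $\mathcal{R}_t$ the crossing wall on $\mathcal{C}_t$ to $w_B$, yielding the $H$-edges $\mathcal{A}\sim\mathcal{C}_1\sim\dots\sim\mathcal{C}_t\sim\mathcal{B}$ (when $t=0$ the single region $\mathcal{R}_0$ bounds both $\mathcal{A}$ and $\mathcal{B}$, and $\mathcal{A}\sim\mathcal{B}$ directly). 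Hence $\mathcal{A}$ and $\mathcal{B}$ lie in the same component, so $H$ is connected.

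For the running time, by the count above there are at most $\numBorder+1$ inversion calls; exactly as in Eppstein's analysis, each such call inverts at most $n$ points and, since its output $E_p$ has at most $\numBorder$ points, computes the extreme points output-sensitively in $\bigOh(n\numBorder)$ time, and the bookkeeping for \reducedSet is dominated by this. The total is therefore $(\numBorder+1)\cdot\bigOh(n\numBorder)=\bigOh(n\numBorder^2)$. I expect the connectivity of $H$ to be the main obstacle: it is where the topological structure of the Voronoi diagram must be combined with the jump lemma, and care is needed with the general-position assumption on $\gamma$, with the fact that a single boundary may bound more than two regions, and with routing each connecting path through the interior of its region while still landing its endpoints on the relevant walls.
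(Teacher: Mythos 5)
Your proposal is correct and follows essentially the same route as the paper: start from the boundary reached by the seed's inversion call via Lemmas~\ref{lemma:seed} and~\ref{lemma:boundary}, propagate to every other boundary by decomposing an arbitrary path into subpaths each contained in a single class region and applying Lemma~\ref{lemma:boundary-jump} across each transition, and bound the running time by observing that the inversion method is called at most $\numBorder+1$ times at $\bigOh(n\numBorder)$ each. Your version makes explicit several points the paper leaves as ``evidently''---the auxiliary boundary graph $H$ and its connectivity via a perturbed path $\gamma$, the nonemptiness of $E_s$, and the degenerate case $\numBorder=0$---but the decomposition and the lemmas invoked are the same.
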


\begin{proof}
Proving the worst-case time complexity of our algorithm follows directly from the time complexity of the search step of Eppstein's algorithm~\cite{eppstein2022finding}. However, the correctness and completeness of our algorithm follows from Lemmas~\ref{lemma:seed} to~\ref{lemma:boundary-jump}.

First, we know by Lemmas~\ref{lemma:seed}-\ref{lemma:boundary} that Algorithm~\ref{alg:all-border-mine} will select the defining border points of at least one class boundary of \trainingSet. Denote this boundary as $\mathcal{A}$ and consider any other boundary $\mathcal{B}$ of \trainingSet. Evidently, we can draw a path $\mathcal{P}$ from $\mathcal{A}$ to $\mathcal{B}$, which would generally pass through several class regions. Then, let's split $\mathcal{P}$ into several subpaths $\mathcal{P}_1, \mathcal{P}_2, \dots, \mathcal{P}_m$ such that each subpath is completely contained within a single class region. From this, we can directly apply Lemma~\ref{lemma:boundary-jump} on each of the intermediate boundaries that ``cut'' $\mathcal{P}$ into these subpaths. Finally, this implies that our algorithm will eventually select every defining point of boundary $\mathcal{B}$, and similarly, it will do the same with all other boundaries of \trainingSet.
\end{proof}

\begin{theorem}
\label{thm:main:best}
Leveraging Chan's algorithm~\cite{chan1996output} for finding extreme points, the algorithm selects every border point of \trainingSet in randomized expected time $\bigOh( n\numBorder \log{\numBorder} )$ for $d=3$, and in
\[
\bigOh \left( \numBorder (n\numBorder)^{1-\frac{1}{\lfloor d/2 \rfloor +1}} (\log{n})^{\bigOh(1)} \right)
\]
time for all constant dimensions $d > 3$.
\end{theorem}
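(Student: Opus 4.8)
The plan is to note that \emph{correctness} and \emph{completeness} are already settled independently of how line~6 of Algorithm~\ref{alg:all-border-mine} is implemented: by Theorem~\ref{thm:main} (via Lemmas~\ref{lemma:seed}--\ref{lemma:boundary-jump}), the algorithm reports exactly the border points of \trainingSet no matter which extreme-points subroutine is used. Hence only the running time has to be re-derived, by substituting Chan's output-sensitive bounds~\cite{chan1996output} into the analysis of the search step.

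First I would bound the number of inversion-method calls. As in Eppstein's analysis, the main loop behaves as a worklist: it is run on the seed $s$ and thereafter on each point newly inserted into \reducedSet, processing each point at most once. By Lemma~\ref{lemma:seed}, every point that the inversion method on a point $p\in\trainingSet$ reports, other than $p$ itself, is a border point of \trainingSet; hence $\reducedSet\subseteq\{s\}\cup\{\text{border points of }\trainingSet\}$, so $|\reducedSet|\le\numBorder+1$ and the loop makes $\bigOh(\numBorder)$ calls. The same lemma is also what keeps the \emph{output} of each call small: \emph{a priori} the number of Delaunay neighbours of $p$ in $(\trainingSet\setminus\trainingSet_c)\cup\{p\}$, i.e.\ $|E_p|$, could be $\Theta(n)$, but Lemma~\ref{lemma:seed} forces $|E_p|\le\numBorder+1$.

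Next I would bound a single call, on a point $p$ of class $c$. The point set $S_p$ is the inversion of $\trainingSet\setminus\trainingSet_c$ together with $p$, so $N:=|S_p|\le n$; building $S_p$ and mapping extreme points back to \trainingSet costs $\bigOh(n)$. Running Chan's algorithm on these $N\le n$ points, whose number of extreme points is $h:=|E_p|=\bigOh(\numBorder)$, takes randomized expected $\bigOh(N\log h)=\bigOh(n\log\numBorder)$ time for $d=3$, and
\[
\bigOh\!\left( N\log h + (Nh)^{1-\frac{1}{\lfloor d/2\rfloor+1}}(\log N)^{\bigOh(1)} \right)=\bigOh\!\left( n\log\numBorder + (n\numBorder)^{1-\frac{1}{\lfloor d/2\rfloor+1}}(\log n)^{\bigOh(1)} \right)
\]
time for constant $d>3$. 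Multiplying by the $\bigOh(\numBorder)$ calls yields $\bigOh(n\numBorder\log\numBorder)$ for $d=3$, and for $d>3$ the bound displayed in the theorem statement plus an $\bigOh(n\numBorder\log n)$ term (which also subsumes the $\bigOh(n\numBorder)$ bookkeeping).

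The step I expect to be the main obstacle is the final simplification that drops this $\bigOh(n\numBorder\log n)$ term for $d>3$. It is dominated by $\numBorder(n\numBorder)^{1-\frac{1}{\lfloor d/2\rfloor+1}}(\log n)^{\bigOh(1)}$ exactly when $(n\numBorder)^{1-\frac{1}{\lfloor d/2\rfloor+1}}=\Omega(n)$, i.e.\ when $\numBorder=\Omega(n^{1/\lfloor d/2\rfloor})$; for smaller $\numBorder$ one either carries the extra $\bigOh(n\numBorder(\log n)^{\bigOh(1)})$ summand or simply observes that it stays below the $\bigOh(n\numBorder^2)$ bound of Theorem~\ref{thm:main} once $\numBorder$ exceeds that polylogarithmic factor --- and this dichotomy is precisely why the case $d=3$ is stated separately, since there the per-call cost $\bigOh(n\log\numBorder)$ beats the general polynomial term $(n\numBorder)^{1/2}(\log n)^{\bigOh(1)}$ when $\numBorder$ is small. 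Two lesser points to discharge: that Lemma~\ref{lemma:seed} genuinely covers the seed call even though $s$ need not be a border point (its proof uses only that the reported points other than $p$ are Delaunay neighbours of $p$ in $(\trainingSet\setminus\trainingSet_c)\cup\{p\}$, which holds verbatim for $p=s$), and that Chan's routine returns the extreme points --- not the whole hull --- within the stated randomized expected bound for $d=3$, which is immediate since the $3$-dimensional hull has $\bigOh(h)$ total size.
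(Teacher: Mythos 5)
Your proof takes the same route as the paper's: the paper's own argument is a one-liner (``use Chan's algorithm for the extreme-point step; the rest is as in Theorem~\ref{thm:main}''), and your write-up is a careful expansion of exactly that --- correctness and completeness are oblivious to the extreme-point subroutine, the loop makes $\bigOh(\numBorder)$ calls with $|E_p|=\bigOh(\numBorder)$ by Lemma~\ref{lemma:seed}, and Chan's bounds are substituted per call and summed. You also correctly discharge the two small points at the end (Lemma~\ref{lemma:seed} applies to the seed even when $s$ is interior, and Chan's $d=3$ routine reports extreme points within the stated bound because the $3$-dimensional hull has linear size). The one place where you are more careful than the paper is the observation that, for $d>3$ and small $\numBorder$, Chan's unavoidable near-linear per-call overhead yields an $\bigOh(n\numBorder(\log n)^{\bigOh(1)})$ term that the displayed bound does not dominate unless $\numBorder=\Omega(n^{1/\lfloor d/2\rfloor})$; the paper's proof simply does not engage with this, and indeed in Eppstein's original statement the extra $\bigOh(n^2)$ term silently absorbed it, whereas here it does not. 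So your proposal is not only consistent with the paper but identifies a genuine imprecision in the theorem as stated; the honest statement would retain the additive $\bigOh(n\numBorder(\log n)^{\bigOh(1)})$ term (or restrict the claim to the regime where it is dominated).
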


Just as with Eppstein's original algorithm, we can use Chan's randomized algorithm~\cite{chan1996output} for finding extreme points of point sets in $\RE^d$, in order to reduce the expected time complexity of our improved algorithm. The remaining of the proof is the same as for Theorem~\ref{thm:main}.

\paragraph*{Acknowledgements}
Thanks to Prof. David Mount for pointing out Eppstein's paper~\cite{eppstein2022finding} and for the valuable discussions on the results presented in this paper. 

\bibliographystyle{plainurl}
\bibliography{nnc}

\end{document}